
\documentclass[english,letterpaper,11pt,reqno]{amsart}

\usepackage{appendix}
\usepackage{amsbsy}
\usepackage{amsfonts}
\usepackage{amsmath}
\usepackage{amssymb}
\usepackage{amsthm}
\usepackage{graphicx}

\makeatletter
\setlength{\@fptop}{0pt}
\makeatother           

\usepackage{ifthen}
\usepackage{textcomp}
\usepackage{abstract}

\usepackage{parskip}  
\usepackage[font=footnotesize]{caption}

\emergencystretch20pt   

\newcommand{\lra}{\longrightarrow}
\newcommand{\ra}{\rightarrow}
\newcommand{\inc}{\ensuremath{\lhook\joinrel\relbar\joinrel\rightarrow}}

\newcommand{\RR}{\mathbb{R}}

\newtheorem{theorem}{Theorem}
\newtheorem*{theorem*}{Theorem}
\newtheorem{lemma}{Lemma}

\newtheorem{corollary}{Corollary}

\newcommand{\beqa}{\begin{eqnarray}}
\newcommand{\beq}{\begin{equation}}
\newcommand{\eeqa}{\end{eqnarray}}
\newcommand{\eeq}{\end{equation}}

\newcommand\ip[3]{\langle {#1},{#2}\rangle_{#3}}
\newcommand{\vtheta}{\vartheta}
\newcommand{\vphi}{\varphi}
\newcommand{\crit}{c_{\rm crit}}

\begin{document}
\title[Blowup of Jang's equation near a spacetime singularity]{Blowup solutions of Jang's equation near a spacetime singularity}
\author{Amir Babak Aazami}
\email{amir.aazami@ipmu.jp}
\address{Kavli IPMU (WPI), University of Tokyo, 5-1-5 Kashiwanoha, Kashiwa-shi, Japan 277-8583}

\author{Graham Cox}
\email{ghcox@email.unc.edu}
\address{Department of
Mathematics, University of North Carolina, Phillips Hall CB \#3250, Chapel Hill, NC 27599}
\maketitle
\begin{abstract}
We study Jang's equation on a one-parameter family of asymptotically flat, spherically symmetric Cauchy hypersurfaces in the maximally extended Schwarzschild spacetime. The hypersurfaces contain apparent horizons and are parametrized by their proximity to the singularity at $r = 0$. We show that on those hypersurfaces sufficiently close to the singularity, \emph{every} radial solution to Jang's equation blows up. The proof depends only on the geometry in an arbitrarily small neighborhood of the singularity, suggesting that Jang's equation is in fact detecting the singularity. We comment on possible applications to the weak cosmic censorship conjecture.
\end{abstract}

\section{Introduction}
Jang's equation, which first appeared in \cite{J78}, is a valuable tool in the study of apparent horizons, on account of the following well-known existence theorem of Schoen and Yau.

\begin{theorem} [\cite{SY81}] \label{SYexist}
Let $(M,g,h)$ be an asymptotically flat initial data set. Then there exists an unbounded open set $\Omega_0 \subset M$, whose boundary is a finite (possibly empty) disjoint union of closed, smoothly embedded apparent horizons, and a function $f \in C^{2,\alpha}(\Omega_0)$, such that $f$ satisfies Jang's equation on $\Omega_0$, $|f(x)| \rightarrow \infty$ uniformly as $x \rightarrow \partial \Omega_0$, and $f(x) \rightarrow 0$ as $|x| \rightarrow \infty$ on each end of $M$.
\end{theorem}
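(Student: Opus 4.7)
The plan is to follow the capillary regularization method. Jang's equation is equivalent to the statement that the graph of $f$ in $M \times \RR$ (with the product metric $g + dt^2$) has mean curvature equal to the trace of the $t$-invariant extension of $h$ restricted to the graph. It is a degenerate quasilinear equation of mean-curvature type, and its solutions generically blow up along hypersurfaces---precisely where apparent horizons occur---so one cannot solve it directly on all of $M$. The first step is to replace the equation by the regularization obtained by adding $-\tau f$ to the Jang operator, for $\tau > 0$. The new zeroth-order term breaks the vertical translation invariance and, via the maximum principle applied using the bound on $\operatorname{tr}_g h$, yields a uniform estimate $\tau \|f_\tau\|_{C^0(M)} \leq C$ with $C$ independent of $\tau$.

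I would then solve the perturbed equation on an exhaustion $\{\Omega_k\}$ of $M$ with zero Dirichlet data by Leray--Schauder, after deriving a priori $C^{1,\alpha}$ estimates; the perturbation term provides an effective barrier near infinity, so that $k \to \infty$ produces a global solution $f_\tau \in C^{2,\alpha}(M)$. The crucial next step is a $\tau$-independent \emph{interior} gradient estimate for $f_\tau$. This is obtained by computing the graph Laplacian of an auxiliary quantity involving $|\nabla f_\tau|$ and the vertical coordinate, applying a Bernstein-type maximum principle on the graph, and absorbing the inhomogeneous terms using the asymptotically flat structure. Combined with Schauder theory, this gives local $C^{2,\alpha}$ compactness of $\{f_\tau\}$ on any region where the $f_\tau$ remain uniformly bounded.

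Let $\Omega_0 \subset M$ denote the open set of points about which $\{f_\tau\}$ stays locally bounded as $\tau \to 0$. On $\Omega_0$, a diagonal subsequence converges to a solution $f \in C^{2,\alpha}(\Omega_0)$ of Jang's equation. On the complement $|f_\tau|$ diverges, and I would analyze the blowup locus by vertically translating the graphs and extracting a limit whose image is a cylinder $\Sigma \times \RR \subset M \times \RR$. Since vertical graphs in $M \times \RR$ satisfy precisely the marginally outer trapped surface equation with respect to $h$, $\partial \Omega_0$ must be a disjoint union of smoothly embedded apparent horizons, outer trapped on the $+\infty$ side and inner trapped on the $-\infty$ side of the blowup. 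Finally, on each asymptotically flat end one obtains the decay $f \to 0$ by comparison with explicit Schwarzschild-type sub- and super-solutions, which are available thanks to the decay of $g$ and $h$.

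The principal obstacle is the blowup analysis. The regularization supplies only the weak bound $\|f_\tau\|_\infty \leq C/\tau$, so the location and regularity of the blowup set must be controlled by purely geometric means. The key input is the uniform interior gradient estimate, which permits the rescaling argument and ensures that the limiting blowup hypersurface is smoothly embedded with a well-defined outward normal; it is this geometric gradient estimate, rather than any direct PDE compactness, that links the analytic behavior of Jang's equation to the existence and regularity of apparent horizons.
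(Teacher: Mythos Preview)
The paper does not actually prove this theorem; it is quoted from \cite{SY81} and used as a black box, with only a brief recollection of the construction appearing later (the regularized equation $H_{\Gamma(f_t)} - \mathrm{tr}_{\Gamma(f_t)} h = t f_t$, the existence and uniqueness of $f_t$ with decay at infinity, and the identification of $\Omega_0$ as the set where $f_t$ converges as $t \to 0$). Your outline is precisely the Schoen--Yau capillary regularization strategy and is fully consistent with that recollection, so there is nothing to compare beyond noting that your sketch correctly fills in what the paper merely cites.
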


As an immediate corollary, if $(M,g,h)$ does not contain any apparent horizons, then there exists at least one global solution to Jang's equation that decays to zero at infinity. This fact has been used to prove the existence of apparent horizons by imposing geometric conditions on $M$ that ensure \emph{no} global solutions exist \cite{SY83,E95,Y01}. Note that Theorem \ref{SYexist} only yields the existence of one global solution\,---\,it does not imply that \emph{every} solution is global in the absence of apparent horizons. More generally, Schoen and Yau's analysis shows that any solution arising from a limit of suitably regularized boundary-value problems can only blow up on an apparent horizon (see Section 3.5 of \cite{AEM10}). This was used in \cite{AM09} and \cite{Eichmair} to prove the existence of apparent horizons in the presence of suitable geometric barriers. (A simple example is when there exists a bounded set $\Omega \subset M$ with at least two boundary components and $H_{\partial \Omega} - |\text{tr}_{\,\partial \Omega}\,h| > 0$.)

The existence arguments cited above are all nonlocal, in the sense that they depend on geometric invariants and boundary properties of some region $\Omega \subset M$. In this paper we tackle the question of whether or not one can force solutions to Jang's equation to blow up as a result of purely local phenomena, such as a nearby spacetime singularity. To that end, we consider a family of spherically symmetric, spacelike Cauchy hypersurfaces $\{L_c\}$ (defined in Section \ref{JangLc}) in the maximally extended Schwarzschild spacetime. These are well-defined for all $c$ in an open interval, $c \in (-\crit,\crit)$, and have the property that the radial coordinate $r$ satisfies $\inf_{L_c} r \rightarrow 0$ as $|c| \rightarrow \crit$. Thus the hypersurfaces are getting progressively ``closer" to the singularity at $r=0$ as $c$ increases toward $\crit$. Our main theorem provides compelling evidence that Jang's equation is in fact able to detect the singularity.

\begin{theorem} \label{blowup}
There exists $c_0 < \crit$ such that if $c_0 < c < \crit$, then Jang's equation admits no global, spherically symmetric solutions on the spacelike Cauchy hypersurface $L_c$.
\end{theorem}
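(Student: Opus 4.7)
The plan is to reduce Jang's equation to a first-order ODE under the spherical symmetry assumption, and then run a phase-plane argument showing that, when $L_c$ is sufficiently close to the singularity, every trajectory is forced to reach the blowup locus $|\phi|=1$ (where $\phi$ is the ``capillarity'' variable defined below) before $f$ can be extended over all of $L_c$. The starting point is to parametrize $L_c$ by a radial arclength parameter $s$, so that the induced metric is $ds^2 + r(s)^2 d\Omega^2$, and to compute explicitly the mean curvature $H(s)$ of the coordinate sphere $\Sigma_s \subset L_c$ together with the trace $P(s)$ of the second fundamental form $h$ of $L_c$ on $\Sigma_s$. A spherically symmetric $f = f(s)$ then satisfies Jang's equation iff
\[
\phi_s \;=\; H(s) - P(s) - \tfrac{2 r_s(s)}{r(s)}\,\phi,\qquad \phi \;:=\; \frac{f_s}{\sqrt{1+f_s^2}},
\]
after the standard reduction; this is an ODE in a single variable $\phi \in (-1,1)$, with $f_s \to \pm\infty$ precisely when $\phi \to \pm 1$.

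Next I would identify the structure of this ODE on $L_c$. The hypersurfaces contain apparent horizons by hypothesis, which translates into the existence of values of $s$ where $H \mp P = 0$; and as $c \to \crit$ the hypersurface $L_c$ dips down to the singularity, so $\inf r(s) \to 0$. The key is to use the Schwarzschild geometry to obtain an asymptotic formula for $H-P$ and the coefficient $2r_s/r$ in a shrinking neighborhood of the minimum-$r$ point of $L_c$ as $c \nearrow \crit$. Because the Schwarzschild curvature blows up at $r=0$, the forcing term $H-P$ should be shown to dominate the damping term $(2r_s/r)\phi$ on a definite subinterval of $L_c$ near the minimum, with a sign chosen so that every solution is driven monotonically toward $\phi = +1$ (or $-1$) on that subinterval.

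I would then make this precise by constructing explicit sub-/super-solutions. Given any $\phi(s_0) \in (-1,1)$, comparison with the constant-barrier $\phi \equiv 1-\delta$ on the subinterval where the forcing is positive shows that $\phi$ must exceed $1-\delta$ somewhere in that interval; iterating with $\delta \to 0$, or using a single barrier constructed from the asymptotic profile of $H-P$, forces $\phi(s_1) = 1$ at some $s_1$ in the interval, which is exactly the blowup of $f_s$. Crucially this argument is uniform in the initial datum $\phi(s_0)$, which is what rules out \emph{all} global radial solutions rather than merely producing one.

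\textbf{Main obstacle.} The hard part is the last step: the Schoen--Yau theory already guarantees existence of at least one blowing-up solution (on an apparent horizon), but here I must control every trajectory, including those with $\phi$ initially very close to $-1$ on the opposite side. This requires a sharp quantitative description of $H(s)-P(s)$ and $r(s)$ on $L_c$ in a neighborhood of the minimum as $c \to \crit$, sharp enough that the forcing provably beats the damping on an interval whose width does not shrink faster than the forcing grows. Establishing such an estimate from the explicit form of $L_c$ — and checking that it produces the desired barrier uniformly in initial data — is where the real work lies.
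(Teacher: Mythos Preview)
Your reduction to a first-order ODE in the variable $\phi = f_s/\sqrt{1+f_s^2}$ and the plan to show that trajectories are forced to $|\phi|=1$ by a forcing term of order $1/r$ are both on target, and essentially match the paper's setup (its variable $h$ is your $\phi$, and Lemma~\ref{lemma:Fbound} gives exactly the kind of lower bound on the forcing you describe, with $r(s)\le As+\delta$ and $\delta\to 0$ as $c\to\crit$). Where you diverge from the paper is precisely at the step you flag as the main obstacle: obtaining blowup \emph{uniformly in the initial value} $\phi(s_0)\in(-1,1)$.

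The paper does \emph{not} attempt a barrier that works for every initial value. Instead it exploits a symmetry you have not used: the transformed equation satisfies $F_c(s,h)=F_c(-s,-h)$, so the distinguished solution $h_c$ with $h_c(0)=0$ (i.e.\ the one launched from the minimum-$r$ point with zero slope) is odd in $s$. The forcing estimate is then only needed on the region $(s,h)\in[0,1/3\sqrt{2}]\times[-1,0]$, which is enough to drive this single solution to $h_c(s_*)=-1$; oddness immediately gives $h_c(-s_*)=+1$, so the graph of $h_c$ sweeps the entire strip $[-1,1]$. Now any global radial solution corresponds to an $h$ with $|h(s)|<1$ for all $s$; its graph must intersect that of $h_c$, and ODE uniqueness then forces $h\equiv h_c$, contradicting $|h|<1$. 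This symmetry-plus-uniqueness trick is the missing idea in your plan: it converts ``one solution blows up'' into ``every solution blows up'' without any uniform barrier, and it is what makes the forcing estimate tractable (you only need it for $h\le 0$). Your proposed constant-barrier iteration is not a substitute---pushing a trajectory starting near $\phi=-1$ past $\phi=1-\delta$ already requires a bound of the type you are trying to prove, so the argument is circular as written.
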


In the proof we construct solutions that blow up \emph{inside} the apparent horizon when $c$ is sufficiently close to $\crit$. Here we define a blowup solution in the traditional ODE sense, meaning that the maximal domain of the solution $f$ is a proper subset of $\mathbb{R}$ and so either $|f|$ or $|f'|$ is unbounded. This is different from the usual notion of a blowup for Jang's equation, which has $|f| \ra \infty$ and necessarily occurs along an apparent horizon (cf. \cite{EM12}). This point is clarified in Section \ref{asymptotic}, where we prove that $|f|$ remains bounded for the particular solutions we construct.

Theorem \ref{blowup} does not rule out the possibility of nonradial global solutions on $L_c$. However, we can show that the solution arising from the limiting construction in the proof of Theorem \ref{SYexist} (see \cite{AEM10} for details) must blow up once $c$ is sufficiently close to $\crit$. This distinction is important because, as was observed in \cite{MalecMurchadha} (and will be seen in the proof of Theorem \ref{blowup}), solutions can blow up on surfaces that are not apparent horizons.

\begin{corollary} \label{SYblowup}
Let $\hat{f}_c$ denote the solution to Jang's equation constructed in the proof of Theorem \ref{SYexist} (with $M = L_c$). If $c_0 < c < \crit$, then $\hat{f}_c$ blows up.
\end{corollary}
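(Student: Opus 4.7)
The plan is to show that $\hat{f}_c$ inherits the spherical symmetry of $L_c$, after which Theorem \ref{blowup} applies directly. Specifically, if $\hat{f}_c$ were defined on all of $L_c$, it would furnish a global, $C^{2,\alpha}$, spherically symmetric solution to Jang's equation, which for $c_0 < c < \crit$ is ruled out by Theorem \ref{blowup}. Combined with Theorem \ref{SYexist} (whose $\partial \Omega_0$ must then be nonempty, with $|\hat{f}_c| \to \infty$ there), this forces $\hat{f}_c$ to blow up. The entire argument therefore reduces to verifying that the Schoen--Yau construction, performed on the spherically symmetric manifold $L_c$, produces a spherically symmetric function.

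To establish this symmetry I would follow the capillarity-regularization scheme of \cite{AEM10}, in which $\hat{f}_c$ is obtained as an iterated limit of solutions to Dirichlet problems
\[
\mathcal{J}(f_{\tau, R}) = \tau f_{\tau, R} \quad \text{on } \Omega_R, \qquad f_{\tau, R}\big|_{\partial \Omega_R} = 0,
\]
where $\mathcal{J}$ denotes the Jang operator, $\{\Omega_R\}$ is a suitable exhaustion of $L_c$ by bounded smooth subdomains, and $\tau>0$ is the regularization parameter. Because $L_c$ is spherically symmetric, the exhausting family can be chosen $SO(3)$-invariant, e.g. as nested radial shells in an $SO(3)$-invariant coordinate parametrizing the orbit space of $L_c \cong \mathbb{R} \times S^2$. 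The regularized operator is uniformly elliptic with a zero-order term of favorable sign, so the maximum principle yields uniqueness of the Dirichlet problem on each $\Omega_R$. Uniqueness, combined with the $SO(3)$-invariance of $\Omega_R$, the equation, and the boundary data, forces each $f_{\tau, R}$ to be spherically symmetric.

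I would then pass to the limits $\tau \to 0^+$ and $R \to \infty$. The standard interior $C^{2,\alpha}$-estimates away from apparent horizons, together with the Schoen--Yau height estimate controlling where $|f|$ can blow up, produce $\hat{f}_c$ on a domain $\Omega_0 \subset L_c$ whose boundary is a finite (possibly empty) disjoint union of apparent horizons. Each step in this limiting procedure preserves $SO(3)$-invariance, so $\hat{f}_c$ is spherically symmetric on $\Omega_0$; the contradiction from the first paragraph then completes the proof.

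The main obstacle will be the bookkeeping needed to ensure that the exhaustion $\{\Omega_R\}$ can be chosen $SO(3)$-invariant (and the Dirichlet problems uniquely solvable) even as apparent horizons appear inside $L_c$, which for $c$ near $\crit$ they do. This may require, for instance, refining the exhaustion by $SO(3)$-invariant tubular neighborhoods of the horizons and showing that shrinking those tubes does not destroy spherical symmetry. Once these technical points are settled, the corollary follows formally from uniqueness of the regularized problems together with Theorem \ref{blowup}.
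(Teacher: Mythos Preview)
Your proposal is correct and follows essentially the same strategy as the paper: reduce to showing that the Schoen--Yau solution inherits the spherical symmetry of the data, then invoke Theorem~\ref{blowup}. The symmetry is obtained in both cases from uniqueness of the regularized problem.

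The paper organizes this slightly differently, and more cleanly. Rather than working with Dirichlet problems on an exhaustion $\{\Omega_R\}$, it uses directly the \emph{global} regularized solution $f_t$ (the unique solution of $H_{\Gamma(f_t)} - \mathrm{tr}_{\Gamma(f_t)} h = t f_t$ decaying at infinity) and proves a general lemma: any diffeomorphism $\varphi$ with $\varphi^*g = g$ and $\varphi^*h = h$ satisfies $f_t = f_t \circ \varphi$, hence $\hat f = \hat f \circ \varphi$ after passing to the limit $t \to 0$. This sidesteps entirely the bookkeeping you anticipate with $SO(3)$-invariant exhaustions and tubular neighborhoods of horizons: the presence of apparent horizons is irrelevant to the regularized problem for $t>0$, and no choice of $\Omega_R$ enters. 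Your route works, but the extra layer of Dirichlet approximations and the worry about horizons inside the exhaustion are unnecessary once one appeals to uniqueness of the global $f_t$.
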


It is then a consequence of Schoen and Yau's blowup analysis that $L_c$ contains an apparent horizon when $c > c_0$. This conclusion is trivial in the present setting, since it will follow immediately from the definition below that $L_c$ contains an apparent horizon for \emph{any} value of $c$. However, this example suggests that Jang's equation may be important in studying the relation between singularities and apparent horizons in a broader setting.

This relation was first put forth by Penrose in \cite{P65}, the first of the famous ``singularity theorems."  A recent variant of this theorem in terms of apparent horizons (rather than the trapped surfaces in Penrose's original formulation) can be found in \cite{eichmair2012topological}.  Given the role that apparent horizons play in the weak cosmic censorship conjecture (see \cite{W84}), and the link that Theorem~\ref{SYexist} provides between Jang's equation and apparent horizons in initial data sets, our results suggest that Jang's equation is an important tool in further understanding the censorship conjecture.

Such an understanding would of course require the study of non-spherically symmetric geometries, which is beyond the scope of this paper.  An extension of Theorem~\ref{blowup} to generic spacelike hypersurfaces is expected to be highly nontrivial\,---\,for example, it is not known if similar blowup behavior occurs on the family of highly non-symmetric Cauchy hypersurfaces constructed in \cite{WI91}, which approach the singularity at $r=0$ but contain no trapped surfaces, even in their causal past. It seems unlikely that a direct generalization of Theorem~\ref{blowup} will hold for every possible family of hypersurfaces near a given spacetime singularity, but we hasten to point out that such generality is not required to deduce the existence of an apparent horizon. To wit, it suffices to construct a single family of hypersurfaces, approaching the singularity, on which Jang's equation blows up once some critical threshold has been crossed. Therefore, while the slices arising in the general case will not have the high degree of symmetry that is essential to the proof of Theorem \ref{blowup}, it may still be possible to construct a distinguished family of hypersurfaces that is amenable to direct analysis. However, without further analysis it is not possible to know the extent to which the ideas of this paper will generalize to more complicated geometries. 

A possible generalization of our approach is as follows.  Let $M$ be a singular spacetime with a hypersurface-orthogonal congruence of timelike geodesics $\{\gamma\}$, and suppose that one member of this congruence, say $\gamma_0$, is future incomplete. Let $(a,b)$ be the maximal domain of $\gamma_0(t)$, which necessarily has $b < \infty$, and let $\{\Sigma_t\}$ be a family of spacelike hypersurfaces intersecting $\gamma_0$ at $\gamma_0(t)$. If each $\Sigma_t$ is asymptotically flat (or has suitable boundary behavior), one could then perform a blowup analysis of Schoen and Yau's solution of Jang's equation (i.e., the solution whose existence is guaranteed by Theorem \ref{SYexist}) in the $t \ra b$ limit, in an attempt to conclude the existence of an apparent horizon in $\Sigma_t$.

Given the wide variety of singular behavior that may be present in a spacetime (see Chapter 9 of \cite{W84} for examples), one should not expect this approach to work universally. For instance, it seems unlikely that arguments like those in our proof of Theorem \ref{blowup} will work in the presence of a conical singularity, where the metric and its derivatives are uniformly bounded on the complement of a single point, but cannot be smoothly extended to the entire manifold \cite{T94}. The method presented in this paper seems better adapted to the case of curvature singularities. In this case we might take motivation from the Weyl Curvature Hypothesis \cite{P79,T87,G91}, which posits that the Weyl tensor becomes unbounded as one approaches a final singularity. This is certainly the case for the (maximally extended) Schwarzschild spacetime: as a vacuum solution of Einstein's equations it is necessarily Ricci flat, hence the Weyl and Riemann tensors coincide, and it is well known that the latter blows up at $r=0$.

\section{Definitions and notation}
In this section we introduce Jang's equation, then describe the maximally extended Schwarzschild solution (using Kruskal coordinates) and explain the importance of these coordinates to our analysis. Finally, we define the family of spacelike Cauchy hypersurfaces $\{L_c\}$ alluded to in the introduction, and derive Jang's equation for radial functions on $L_c$.

\subsection{Jang's equation}
We follow as much as possible the notation and conventions of the excellent recent review \cite{AEM10}. Consider an initial data set $(M,g,h)$, where $M$ is a smooth 3-manifold with Riemannian metric $g$, and $h$ is a symmetric $(0,2)$-tensor. The initial data set can be viewed intrinsically, though in practice $M$ arises as a spacelike slice in a Lorentzian 4-manifold, with $g$ and $h$ the induced metric and second fundamental form, respectively. It is of interest to know when a given initial data set corresponds to a spacelike slice of Minkowski spacetime. In \cite{J78} it was shown that this is the case precisely when there exists a smooth function $f$ such that
\beqa
	\mathfrak{h}\ :=\ h - \frac{\nabla^2 f}{\sqrt{1 + | df |^2}}\nonumber
\eeqa
vanishes and the metric $\mathfrak{g} := g + df \otimes df$ is flat. Jang's equation arises from contracting $\mathfrak{h}$ and $\mathfrak{g}$;  in local coordinates this reads
\beqa
\label{eqn:jang1}
	\left(g^{ij} - \frac{\nabla^i f \nabla^j f}{1+| df|^2}\right)\left(h_{ij} -\frac{\nabla_i \nabla_j f}{\sqrt{1+| df|^2}}\right)\ =\ 0.
\eeqa
This can be recast in a more familiar geometric form by noticing that $\mathfrak{g}$ is precisely the induced metric on the graph of $f$ in the product $(M \times \mathbb{R},g+dt^2)$. Denoting this graph by $\Gamma(f)$, we can write Jang's equation as
\beqa
	H_{\,\Gamma(f)}\ =\ \textrm{tr}_{\,\Gamma(f)}\, h,\nonumber
\eeqa
where $H_{\,\Gamma(f)}$ denotes the mean curvature of $\Gamma(f)$ computed with respect to the downward-pointing unit normal
\beqa
	\nu_{\,\Gamma(f)}\ :=\ \frac{(\nabla f, -1)}{ \sqrt{1+|df|^2}} \nonumber
\eeqa
and $\textrm{tr}_{\,\Gamma(f)}\, h$ denotes the trace of $h$ with respect to the induced metric $\mathfrak{g}$ on the graph, where $h$ is extended to $M \times \mathbb{R}$ by zero in the vertical direction.

A significant application of Jang's equation appeared in the proof of the positive mass theorem for general initial data sets \cite{SY81}, where it was used to reduce the problem to the time-symmetric case, which had already been established.

\subsection{The maximally extended Schwarzschild spacetime}
Following the notation in Chapter 13 of \cite{ON83}, we express the maximally extended Schwarzschild spacetime as a warped product $Q \times_r \mathbb{S}^2$, where $Q \subset \RR^2$ is the Kruskal plane
\beqa
	Q\ =\ \{(u,v)\,:\,uv > -2m/e\}.\nonumber
\eeqa
The warping factor $r$ is given implicitly as a function of $u$ and $v$ by the equation $\phi(r) = uv$, where $\phi(r) := e^{r/2m-1}\left(r-2m\right)$. (To see that this is well-defined, it suffices to note that $\phi_r(r) > 0$ for $r>0$, hence $\phi:(0,\infty)\lra (-2m/e,\infty)$ is a diffeomorphism.) We refer to $(u,v,\theta,\varphi)$ as the Kruskal coordinates. With respect to these coordinates, the maximally extended Schwarzschild metric of mass $m$ is given by
\beqa
\label{oneill}
\tilde{g}\ =\ F(r)\,du \otimes dv + F(r)\,dv \otimes du + r^2 d\vtheta\otimes
d\vtheta
+r^2\sin^2\vtheta\,d\vphi\otimes d\vphi,
\eeqa
where $F(r) := (8m^2/r)\,e^{1-r/2m}$. Here the $r=2m$ apparent horizon\,---\,at which the metric is singular in the Schwarzschild coordinates\,---\,corresponds to the coordinate axes $\{u=0\} \cup \{v=0\}$, where $\tilde{g}$ is easily seen to be smooth.

It is this property of the Kruskal coordinates that we desire for our study. Given the crucial geometric role the apparent horizon plays in the analysis of Jang's equation (as described in the Introduction), it is clearly inappropriate to use a coordinate system in which the metric becomes singular precisely at the point of interest. Kruskal coordinates circumvent this problem, allowing us to simultaneously study solutions on either side of the horizon.

\subsection{The $L_c$ hypersurfaces}
\label{JangLc}
The family of Cauchy hypersurfaces $L_c$ (which appeared in the statement of Theorem \ref{blowup} in the Introduction) are defined by $L_c := \{(u,v,\vtheta,\vphi)\,:\,v = u+c\}$ (see Figure \ref{fig1}; note that $L_0$ is the familiar $t=0$ Cauchy time slice in Schwarzschild coordinates). The condition $uv > -2m/e$ implies that $L_c$ is well-defined as long as
\beqa
	|c| < \crit\ :=\ \sqrt{\frac{8m}{e}}\cdot
\eeqa
It is shown in the Appendix that on each $L_c$ Jang's equation is given by
\beqa
\label{jang}
f''\ =\ 
c\sqrt{\frac{2m}{\phi_r(r)}+\frac{(f')^2}{4}}\hspace{-.12in}&&\hspace{-.12in}\left[\frac{1}{\phi_r(r)}\left(\frac{1}{2m}
-\frac{3}{r}\right) - \frac{(f')^2}{2mr}\right]\\
&& -\ \frac{2u+c}{4mr}(f')^3\ -\ \frac{2u+c}{2\phi_r(r)}\left(\frac{1}{2m} + \frac{5}{r}\right)f'\,\nonumber
\eeqa
for any function $f(u)$, where primes denote differentiation with respect to $u$. Since $r$ is implicitly a function of $u$ and $v$, we can write $r = r(u)$ for the unique radial coordinate of
the point $(u,v(u)) = (u,u+c)$ on $L_c$.

\begin{figure}[t!]
\begin{center}
\includegraphics[scale=.8]{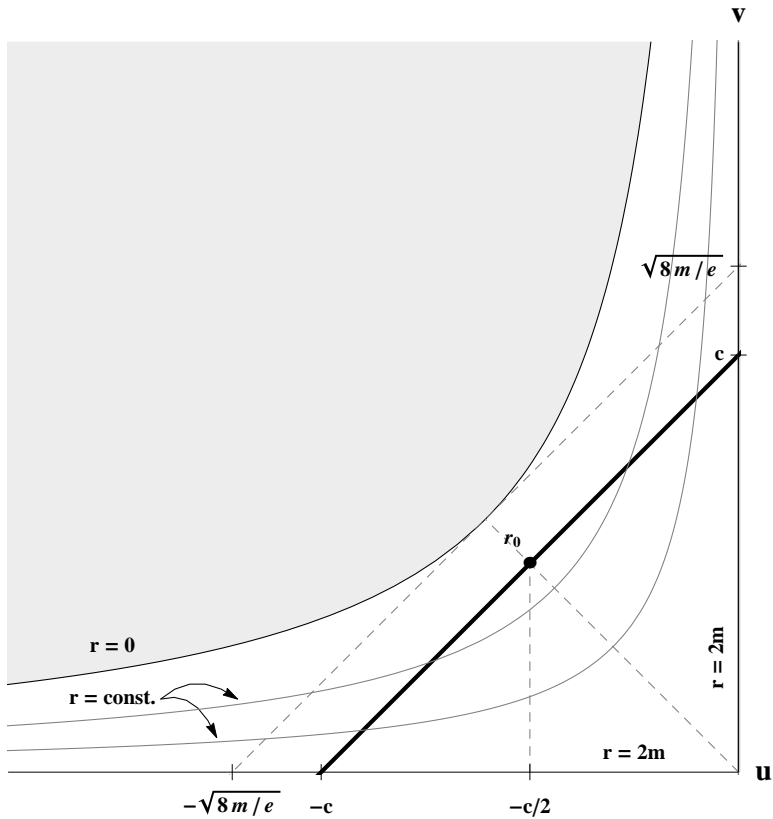}
\end{center}
\caption{The interior black hole region $\{u < 0\} \cap \{v > 0\}$ in the Kruskal plane $Q$. Note that each point in $Q$ corresponds to a 2-sphere in the maximally extended Schwarzschild spacetime. The apparent horizon at $r=2m$ corresponds to the coordinates axes $\{u=0\} \cup \{v=0\}$.
The solid black line is the spherically symmetric, spacelike Cauchy hypersurface $L_c$  defined by $v(u) = u+c$. The minimum value of $r$ on $L_c$ is denoted $r_0$; this value is achieved at $u=-c/2$ and is strictly positive provided $c< \sqrt{8m/e}$. Our main theorem describes the behavior of Jang's equation on $L_c$ as $c \ra \sqrt{8m/e}$ (hence $r_0 \rightarrow 0$).}
\label{fig1}
\end{figure}

\section{Blowup solutions of Jang's equation on $L_c$}
To produce a blowup solution, we consider Jang's equation \eqref{jang} with the initial conditions
\beqa
\label{initialcond}
f(-c/2)\ =\ f'(-c/2)\ =\ 0.
\eeqa
By the Picard--Lindel\"{o}f theorem, there exists a unique, smooth solution on a maximal interval $I_c \subset \RR$ that contains $u=-c/2$. Recall from Figure \ref{fig1} that $(u,v)=(-c/2,c/2)$ is precisely the point in $L_c$ at which the minimum value of $r$ is achieved, hence we are prescribing initial data at the point ``closest" to the singularity at $r=0$. The condition $f_c(-c/2) = 0$ is imposed merely for convenience, since if $f_c(u)$ solves \eqref{jang} then so does $f_c(u) + \beta$ for any $\beta \in \RR$. The condition $f_c'(-u/2)=0$ ensures that $f_c$ is symmetric about the line $\{v=-u\}$. This symmetry, demonstrated in Lemma \ref{lemma:2h} below, will be crucial to our blowup construction.

The case $c=0$ is trivial, with $f_0(u) \equiv 0$ for all $u$, and $L_{-c}$ is equivalent to $L_c$, so it suffices to consider $0 < c < \crit$. We can thus define a new variable
\beqa
\label{us}
s\ =\ \gamma(u)\ :=\ \frac{2}{c}u + 1,
\eeqa
on $\RR$, which in particular maps the interval $[-c,0]$ to $[-1,1]$.  Next, we define
\beqa
\label{hdef}
h(s)\ :=\ \frac{f'(\gamma^{-1}(s))}{\sqrt{1+f'(\gamma^{-1}(s))^2}}
\eeqa
and observe that
\beqa
\label{hinv}
f'(\gamma^{-1}(s))\ =\ \frac{h(s)}{\sqrt{1-h(s)^2}},
\eeqa
so $|f'(\gamma^{-1}(s))|$ blows up precisely when $|h(s)| = 1$. (Note that $'$ always denotes differentiation with respect to $u$, and never $s$.) Rewriting \eqref{jang} in terms of $h$, we obtain
\beqa
\label{jang2}
\frac{dh}{ds}\ =\ \frac{c^2}{2}\hspace{-.125in}&&\hspace{-.125in}\sqrt{\frac{2m(1-h^2)}{\phi_r(r)} + \frac{h^2}{4}}\left[\frac{(1-h^2)}{\phi_r(r)}\left(\frac{1}{2m} - \frac{3}{r}\right) - \frac{h^2}{2mr}\right]\\
&&\hspace{.85in}-\,\frac{c^2 s}{8mr}\,h^3 - \frac{c^2 s}{4\phi_r(r)}\left(\frac{1}{2m}+\frac{5}{r}\right) h(1-h^2).\nonumber 
\eeqa
Henceforth we call \eqref{jang2} the \textit{transformed Jang equation}.  The function $h$ is more convenient to work with than $f$, and is a frequently utilized quantity in the study of Jang's equation with spherical symmetry \cite{MalecMurchadha,W10}.  

\begin{lemma}
\label{lemma:jangh}
For any $c \in (0,\crit)$, if $f_{c}\colon I_c \lra \RR$ denotes the maximal solution to Jang's equation~\eqref{jang} with the initial conditions \eqref{initialcond}, then the function $h_{c}\colon \gamma(I_c) \lra \RR$ defined in \eqref{hdef} is a solution to the transformed Jang equation \eqref{jang2}, with $h_c(0) = 0$ and $|h_c(s)| < 1$ for all $s \in \gamma(I_c)$.

Conversely, if $h_c: J_c \lra \RR$ denotes the maximal solution to the transformed Jang equation with initial condition $h_c(0) = 0$ and $J_c^1 := \{s \in J_c : |h(s)| < 1\}$, then the function $f_c: \gamma^{-1}(J_c^1) \lra \RR$ given by \eqref{hinv} (with $f_c(-c/2) = 0$) is the maximal solution to \eqref{jang} with the initial conditions \eqref{initialcond}.
\end{lemma}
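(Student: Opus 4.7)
The plan is to treat Lemma \ref{lemma:jangh} as a pair of routine change-of-variables verifications. The computations are driven by the algebraic identities $(f')^2 = h^2/(1-h^2)$ and $\sqrt{1+(f')^2} = 1/\sqrt{1-h^2}$, both valid whenever $h = f'/\sqrt{1+(f')^2}$, together with
$$\frac{dh}{du}\ =\ \frac{f''}{(1+(f')^2)^{3/2}},\qquad \frac{du}{ds}\ =\ \frac{c}{2},$$
which combine to give $dh/ds = (c/2)\,f''/(1+(f')^2)^{3/2}$. One additional identity I will use is $2u+c = cs$, which is immediate from the definition \eqref{us} of $s$.

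For the forward direction, given a solution $f_c$ of \eqref{jang} on $I_c$, I define $h_c$ by \eqref{hdef}. The bound $|h_c(s)| < 1$ is automatic from the algebraic inequality $|x|/\sqrt{1+x^2} < 1$, and $h_c(0) = 0$ follows from $f'_c(-c/2) = 0$. To derive \eqref{jang2}, I substitute \eqref{jang} into the formula for $dh/ds$ and simplify term by term: the identities above yield $(f')^3/(1+(f')^2)^{3/2} = h^3$ and $f'/(1+(f')^2)^{3/2} = h(1-h^2)$, which handle the last two terms of \eqref{jang}, while the square-root term reduces to
$$\frac{1}{\sqrt{1+(f')^2}}\sqrt{\frac{2m}{\phi_r(r)}+\frac{(f')^2}{4}}\ =\ \sqrt{\frac{2m(1-h^2)}{\phi_r(r)}+\frac{h^2}{4}};$$
the leftover factor $(1+(f')^2)^{-1} = (1-h^2)$ then distributes into the bracket in \eqref{jang} to produce exactly the bracket in \eqref{jang2}.

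For the converse, let $h_c$ be the maximal solution of \eqref{jang2} with $h_c(0) = 0$, and restrict attention to the connected component of $J_c^1$ containing $0$. Defining $f_c$ by integrating $f'_c(u) = h_c(\gamma(u))/\sqrt{1-h_c(\gamma(u))^2}$ with $f_c(-c/2) = 0$, reversing the algebra of the previous paragraph shows $f_c$ satisfies \eqref{jang}, and the initial conditions hold because $h_c(0) = 0$. Maximality reduces to a standard continuation argument: at any endpoint of $\gamma^{-1}(J_c^1)$ that lies interior to $J_c$, one has $|h_c| \to 1$ and hence $|f'_c| \to \infty$ by \eqref{hinv}, precluding a $C^2$ extension of $f_c$; conversely, any extension of $f_c$ would via \eqref{hdef} yield an extension of $h_c$, contradicting either the maximality of $J_c$ or the definition of $J_c^1$. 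There is no analytic obstacle\,---\,the maps $x \mapsto x/\sqrt{1+x^2}$ and $h \mapsto h/\sqrt{1-h^2}$ are mutually inverse diffeomorphisms between $\RR$ and $(-1,1)$\,---\,and the only bookkeeping requiring care is the conversion of the square-root term above, where the prefactor $1/\sqrt{1+(f')^2}$ must be absorbed into the square root before substituting for $f'$ in terms of $h$.
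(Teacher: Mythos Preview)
Your proposal is correct and follows essentially the same approach as the paper's proof. The paper dismisses the forward direction as ``an elementary computation'' whereas you spell out the term-by-term algebra; for the converse, both you and the paper argue maximality by contradiction, noting that any genuine extension of $f_c$ would, via \eqref{hdef}, produce a solution of the transformed equation agreeing with $h_c$ yet satisfying $|h_c|<1$ at a point where $|h_c|\geq 1$. Your explicit mention of the connected component of $J_c^1$ containing $0$ is a small clarification the paper leaves implicit.
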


\begin{proof}
The correspondence between $f_c$ and $h_c$ follows immediately from the relation between the two equations \eqref{jang} and \eqref{jang2}, which is the result of an elementary computation. It only remains to prove the maximality of $f_c$ asserted in the second half of the lemma. Suppose this is not true, so there exists a function $\tilde{f}_c$ satisfying \eqref{jang} and \eqref{initialcond} on an open interval $\tilde{I}_c$ that strictly contains $\gamma^{-1}(J_c^1)$. Then the corresponding function $\tilde{h}_c$, defined by \eqref{hdef}, satisfies the transformed Jang equation on $\gamma(\tilde{I}_c)$, with $\tilde{h}_c(0) = 0$, and so $\tilde{h}_c$ agrees with the maximal solution $h_c$ on $\gamma(\tilde{I}_c)$. But this is not possible, because $|\tilde{h}_c(s)| < 1$ on $\gamma(\tilde{I}_c)$, whereas $|h_c(s)| \geq 1$ on $\gamma(\tilde{I}_c) \setminus J_c^1$.
\end{proof}

In particular, if $|h_c(s)| = 1$ for some $s$, then the maximal solution $f_c$ to \eqref{jang} and \eqref{initialcond} cannot be defined for all $u \in \RR$.

\begin{lemma}
\label{lemma:2h}
If $h(s)$ is a solution to the transformed Jang equation \eqref{jang2}, then $-h(-s)$ is also a solution. In particular, the unique solution $h_c$ with $h_c(0) = 0$ is an odd function of $s$.
\end{lemma}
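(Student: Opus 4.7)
The plan is to establish a symmetry of the equation \eqref{jang2} under the involution $h(s) \mapsto -h(-s)$, and then invoke uniqueness of solutions to the initial value problem at $s=0$ to conclude that $h_c$ is odd.

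The first step is to observe that the radial coordinate $r$ is an \emph{even} function of $s$. Indeed, from \eqref{us} one has $u = c(s-1)/2$ and $v = u + c = c(s+1)/2$, so $uv = c^2(s^2-1)/4$, and since $r$ is determined by $uv$ via $\phi(r) = uv$, it depends on $s$ only through $s^2$. Consequently $\phi_r(r)$ and every coefficient involving $r$ alone is even in $s$.

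Next, I would decompose the right-hand side of \eqref{jang2} into two pieces, writing it as $A(r, h) + s\, B(r, h)$, where
\begin{equation*}
A(r,h) = \frac{c^2}{2}\sqrt{\frac{2m(1-h^2)}{\phi_r(r)} + \frac{h^2}{4}}\left[\frac{(1-h^2)}{\phi_r(r)}\left(\frac{1}{2m}-\frac{3}{r}\right) - \frac{h^2}{2mr}\right],
\end{equation*}
\begin{equation*}
B(r,h) = -\frac{c^2}{8mr}\,h^3 - \frac{c^2}{4\phi_r(r)}\left(\frac{1}{2m}+\frac{5}{r}\right)h(1-h^2).
\end{equation*}
Inspection shows that $A$ involves only even powers of $h$, so $A(r,-h) = A(r,h)$, whereas $B$ is a combination of $h^3$ and $h(1-h^2) = h - h^3$, so $B(r,-h) = -B(r,h)$. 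In short, $A$ is even in $h$ and $B$ is odd in $h$.

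With these parities in hand, I would set $\tilde h(s) := -h(-s)$ and verify directly that $\tilde h$ satisfies \eqref{jang2}. One has $\tilde h'(s) = h'(-s)$, and evaluating the right-hand side of \eqref{jang2} at $(s,\tilde h(s))$ gives
\begin{equation*}
A(r(s),\tilde h(s)) + s\, B(r(s),\tilde h(s)) = A(r(-s),h(-s)) + (-s)\,B(r(-s),h(-s)) = h'(-s),
\end{equation*}
using evenness of $r$, evenness of $A$ in $h$, and oddness of $B$ in $h$, followed by the fact that $h$ satisfies the equation at the point $-s$. This matches $\tilde h'(s)$, so $\tilde h$ is also a solution. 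For the second assertion, note that $\tilde h_c(0) = -h_c(0) = 0$, so $\tilde h_c$ and $h_c$ are both maximal solutions of \eqref{jang2} with the same initial condition at $s=0$; Picard–Lindel\"of then forces $\tilde h_c = h_c$, i.e., $h_c$ is odd.

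There is no real obstacle here — the content of the lemma is a symmetry of the ODE, and the only step requiring care is bookkeeping the parities of $A$, $B$, and $r$. The one subtlety worth double-checking is that the maximal domain $J_c$ of $h_c$ is symmetric about $0$, which follows from the same uniqueness argument applied to $\tilde h_c$.
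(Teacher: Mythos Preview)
Your proof is correct and follows essentially the same approach as the paper: both establish that $r$ is even in $s$ via $\phi(r) = \tfrac{c^2}{4}(s^2-1)$, deduce the symmetry $F_c(s,h) = F_c(-s,-h)$ of the right-hand side (your $A+sB$ decomposition is simply a more explicit verification of this), and then apply the chain rule and uniqueness at $s=0$. The only difference is cosmetic---the paper asserts the symmetry ``follows immediately'' while you spell out the parity bookkeeping.
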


\begin{proof}
We write \eqref{jang2} in the general form 
\beqa
\label{eqn:F}
\frac{dh}{ds}\ =\ F_c(s,h),
\eeqa
recalling that $r$ is defined implicitly as a function of $s$. Using \eqref{us}, we have
\beqa
\label{eqn:rs}
	\phi(r)\ =\ u(u+c)\ =\ \frac{c^2}{4} (s^2-1),
\eeqa
and so $r(s) = r(-s)$.  It follows immediately that $F_c(s,h) = F_c(-s,-h)$.  Now let $h(s)$ be a solution to \eqref{eqn:F} and consider the function $g(s) := -h(-s)$.  By the chain rule,
$$
\frac{dg}{ds}\ =\ F_c(s,g(s)),
$$
hence $g$ also solves \eqref{eqn:F}.
\end{proof}

Therefore, if we can prove existence of an $s_0 > 0$ such that $h_c(s_0) = -1$, it will follow immediately that $h_c(-s_0) = 1$, hence the range of $h_c$ contains the closed interval $[-1,1]$. Then if $f$ satisfies Jang's equation (with \emph{any} initial values) on $[-\gamma^{-1}(s_0),\gamma^{-1}(s_0)]$, the graph of the corresponding $h$ must intersect the graph of $h_c$, which is only possible if the two functions coincide.

To establish the existence of such an $s_0$, we require a $c$-independent bound on $dh_{c}/ds$. The first task in deriving such an estimate is understanding the rate at which the minimum value of $r$ (which is achieved at $s=0$) approaches $0$ as $c \ra \crit$.

\begin{lemma} \label{rbound}
Let $c \in (0,\crit)$, and define
\beqa
	\delta\ :=\ \sqrt{em(\crit^2-c^2)}.
	\label{deltadef}\nonumber
\eeqa
Then
\beqa
	r(s)\ \leq\ 2\sqrt{2}m s + \delta\nonumber
\eeqa
for all $s \in \RR$.
\end{lemma}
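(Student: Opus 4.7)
The plan is to convert the inequality, via the relation $\phi(r(s)) = \frac{c^2}{4}(s^2 - 1)$ from \eqref{eqn:rs} and the strict monotonicity of $\phi$, into the equivalent statement
\[
\phi(2\sqrt{2}ms + \delta) \;\geq\; \phi(r(0)) + \frac{c^2}{4}s^2.
\]
Since $r(s) = r(-s)$, it suffices to handle $s \geq 0$ (the bound for $s < 0$ then holds with $s$ replaced by $|s|$). I would split the inequality into two independent pieces: the $s=0$ case $\phi(\delta) \geq \phi(r(0)) = -c^2/4$, i.e., $r(0) \leq \delta$; and a quadratic increment $\phi(2\sqrt{2}ms + \delta) - \phi(\delta) \geq \frac{c^2}{4}s^2$.

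For the first, I would substitute $x := \delta/(2m) \geq 0$ and use $\delta^2 = 8m^2 - emc^2$ to rewrite $\phi(\delta) + c^2/4 \geq 0$ as $G(x) := x^2 + 2e^x(1-x) - 2 \leq 0$ for $x \geq 0$. Since $G(0) = 0$ and $G'(x) = 2x(1-e^x) \leq 0$ on $[0,\infty)$, the function $G$ is nonincreasing from zero, giving the bound immediately. For the second inequality, the key estimate is
\[
\phi'(t) \;=\; \frac{t}{2m}\,e^{t/(2m) - 1} \;\geq\; \frac{t}{2me} \qquad (t \geq 0),
\]
which uses nothing more than $e^{t/(2m)-1} \geq e^{-1}$. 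Integrating from $\delta$ to $2\sqrt{2}ms + \delta$,
\[
\phi(2\sqrt{2}ms + \delta) - \phi(\delta) \;\geq\; \frac{(2\sqrt{2}ms + \delta)^2 - \delta^2}{4me} \;\geq\; \frac{2ms^2}{e} \;\geq\; \frac{c^2 s^2}{4},
\]
where I discard the nonnegative cross term $\sqrt{2}s\delta/e$, and use $c^2 \leq \crit^2 = 8m/e$ in the last step. Combining the two estimates with the initial reformulation finishes the proof.

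The argument is largely mechanical once the right form is identified, so the main obstacle is conceptual: one must use the linearized lower bound $\phi'(t) \geq t/(2me)$ rather than a finer estimate, and anchor the integration at $\delta$ rather than at $r(0)$. The coefficient $2\sqrt{2}m$ is in fact the smallest for which this argument gives a bound valid uniformly up to $c = \crit$; any smaller slope would fail to absorb $(c^2/4)s^2$ at the critical threshold $\delta \to 0$, which is precisely the regime that drives the blowup analysis.
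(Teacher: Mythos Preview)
Your argument is correct. Both proofs rest on the same derivative bound $\phi'(t)\geq t/(2me)$, but you deploy it differently from the paper. The paper integrates this inequality from $0$ to $r$ to obtain the global lower bound $\phi(r)\geq -2m/e + r^2/(4em)$, then substitutes $\phi(r(s))=\tfrac{c^2}{4}(s^2-1)$ to get $r(s)^2\leq emc^2 s^2+\delta^2$ and finishes with $\sqrt{a+b}\leq\sqrt a+\sqrt b$. You instead integrate from $\delta$ to $2\sqrt{2}ms+\delta$ and add a separate $s=0$ estimate; the latter forces you through the auxiliary function $G(x)=x^2+2e^x(1-x)-2$ and its monotonicity. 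That step is valid, but note it is subsumed by the paper's single integration: evaluating the paper's bound at $s=0$ already gives $r(0)^2\leq\delta^2$ without any extra work. So your decomposition is sound but longer; the paper's route avoids the split entirely by anchoring the integration at $0$ rather than at $\delta$. Your closing remark about the sharpness of the slope $2\sqrt{2}m$ in the limit $\delta\to 0$ is a nice observation and is consistent with the paper's bound $r\leq\sqrt{em}\,c\,|s|+\delta$, which saturates at $c=\crit$.
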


\begin{proof}
The function $\phi(r)$ satisfies $\phi_r(r) \geq r/2em$ and $\phi(0) = -2m/e$, so we integrate to find that
\begin{align*}
	\phi(r)\ \geq\ -\frac{2m}{e}\ +\ \frac{r^2}{4em}\cdot
\end{align*}
Using \eqref{eqn:rs}, this implies
\begin{align*}
	r^2\ \leq\ emc^2 s^2\ +\ (8m^2 - emc^2).
\end{align*}
Observing that $emc^2 < 8m^2$ because $c < \sqrt{8m/e}$, the result now follows from the basic inequality $\sqrt{a+b} \leq \sqrt{a} + \sqrt{b}$.
\end{proof}

We are now ready to prove the main estimate used in our blowup construction, bearing in mind that $F_c(s,h)$ denotes the right-hand side of the transformed Jang equation (as in \eqref{eqn:F}).
 
\begin{lemma}
\label{lemma:Fbound}
Fix $\bar{c} \in (0,\crit)$. There exist positive constants $A$ and $B$ such that, if $c \geq \bar{c}$, then
\beqa
	F_c(s,h)\ \leq\ -B\,\frac{(1-h^2)^2 - \,h^3}{As + \delta}
\eeqa
for all $(s,h) \in [0,1/3\sqrt{2}] \times [-1,0]$.
\end{lemma}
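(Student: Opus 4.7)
The plan is to decompose $F_c = T_1 + T_2 + T_3$, where $T_1$ denotes the square-root term and $T_2, T_3$ the two explicitly $s$-dependent terms, and to show that $|T_1|$ dominates $T_2 + T_3$ on the prescribed region by the claimed margin. Sign analysis is immediate: for $h \in [-1, 0]$ one has $h^3 \leq 0$ and $h(1-h^2) \leq 0$, so $T_2, T_3 \geq 0$; and combining $s \leq 1/(3\sqrt{2})$ with the upper bound from Lemma~\ref{rbound} and $\delta < 2\sqrt{2}m$ yields $r(s) < 6m$, which makes the bracket in $T_1$ strictly negative.

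The analytical core is to upgrade Lemma~\ref{rbound} to a two-sided estimate $r(s) \asymp s + \delta$. Since $\phi_r(r) = r e^{r/2m - 1}/(2m) \leq Kr$ for some constant $K$ on the a priori range $r \leq r_{\max}$, integrating and combining with $\phi(r) = c^2(s^2 - 1)/4$ from \eqref{eqn:rs} yields $r(s) \geq C_0(s + \delta)$ for some $C_0 = C_0(\bar c, m) > 0$. It follows that $\phi_r \asymp r$ and $3/r - 1/(2m) \geq \mu/r$ on the region for some $\mu = \mu(\bar c, m) > 0$.

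To bound $|T_1|$ from below, use $-Y \geq \mu(1-h^2)/(\phi_r r) + h^2/(2mr)$ and pair $X \geq \sqrt{2m(1-h^2)/\phi_r}$ with the first piece and $X \geq |h|/2$ with the second. The preceding $r$-estimates give $X \cdot (-Y) \geq C_1(1-h^2)^{3/2}/(As+\delta) + C_2 |h|^3/(As+\delta)$ for suitable $A > 0$, and the elementary observation $(1-h^2)^{3/2} \geq (1-h^2)^2$ (valid for $|h| \leq 1$) yields $T_1 \leq -B_1[(1-h^2)^2 - h^3]/(As+\delta)$. For the positive terms, the factor $s \leq 1/(3\sqrt{2})$ combined with $r \geq C_0(s+\delta)$ gives $T_2 \leq K_2 |h|^3/(As+\delta)$ and $T_3 \leq K_3 |h|(1-h^2)/(As+\delta)$, and the elementary polynomial inequality $|h|(1-h^2) \leq (1-h^2)^2 + |h|^3$ on $[-1,0]$ (equivalent to $x^4 + 2x^3 - 2x^2 - x + 1 \geq 0$ on $[0,1]$) converts these into $T_2 + T_3 \leq B_2[(1-h^2)^2 - h^3]/(As+\delta)$. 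Combining yields $F_c \leq -(B_1 - B_2)[(1-h^2)^2 - h^3]/(As+\delta)$.

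The principal obstacle is to track constants carefully enough to ensure $B_1 > B_2$. The negative contribution $T_1$ carries a factor $c^2/2 \geq \bar c^2/2$ and benefits from the large magnitude of $3/r - 1/(2m)$ for small $r$, whereas $T_2 + T_3$ is suppressed by the factor $s \leq 1/(3\sqrt{2})$; the specific threshold $1/(3\sqrt{2})$ appears calibrated so that the positive terms can only recoup a fixed fraction of $|T_1|$, leaving a strictly positive $B = B_1 - B_2$. A secondary subtlety is the use of the elementary inequalities $(1-h^2)^{3/2} \geq (1-h^2)^2$ and $|h|(1-h^2) \leq (1-h^2)^2 + |h|^3$ to extract the precise algebraic form $(1-h^2)^2 - h^3$ stated in the lemma.
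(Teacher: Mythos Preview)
Your outline has a genuine gap: you explicitly identify ``the principal obstacle is to track constants carefully enough to ensure $B_1 > B_2$'' and then never carry out that verification. Since $T_2,T_3$ and $|T_1|$ all scale like $1/r$ (or worse) as $r\to 0$, there is no soft reason why the subtraction $B_1-B_2$ should come out positive; without an explicit computation the argument is incomplete. Your speculation that the threshold $1/(3\sqrt 2)$ is ``calibrated so that the positive terms can only recoup a fixed fraction of $|T_1|$'' is close in spirit, but the actual mechanism is different and more delicate than a crude comparison of $B_1$ and $B_2$.

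The paper avoids this difficulty altogether by exploiting a cancellation you miss. Instead of pairing $X \geq \sqrt{2m(1-h^2)/\phi_r}$ with the first bracket piece and $X \geq |h|/2$ with the second, it bounds the square root from below by a \emph{sum}, via $\sqrt{2(a+b)}\geq \sqrt a+\sqrt b$, obtaining the extra cross term $(-h/(2\sqrt 2))\cdot\frac{(1-h^2)}{\phi_r}\bigl(\tfrac{1}{2m}-\tfrac{3}{r}\bigr)$. This cross term has exactly the form $\text{const}\cdot h(1-h^2)/(\phi_r r)$, i.e.\ the same form as $T_3$, and when combined with $T_3$ it yields
\[
\frac{c^2}{4\phi_r(r)}\!\left(\frac{3-5\sqrt 2\,s}{\sqrt 2\,r}-\frac{1+\sqrt 2\,s}{2\sqrt 2\,m}\right)h(1-h^2).
\]
Using $r\leq 2m$ (valid for $s\in[0,1]$), this is bounded above by $\frac{c^2}{4\phi_r}\cdot\frac{1-3\sqrt 2\,s}{\sqrt 2\,m}\,h(1-h^2)\leq 0$ precisely when $s\leq 1/(3\sqrt 2)$. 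That is where the threshold comes from: it makes $T_3$ disappear entirely (absorbed by a piece of $T_1$) rather than leaving a competition of constants. After this step every surviving term is $\leq 0$, so only the \emph{upper} bound $r\leq As+\delta$ from Lemma~\ref{rbound} is needed, and your proposed two-sided estimate $r\asymp s+\delta$ becomes unnecessary.
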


The key point is that the above estimate is valid for any $c \geq \bar{c}$, so it holds uniformly as $c \ra \crit$.

\begin{proof}
We first restrict our attention to the interval $s \in [0,1]$, so we have $r \leq 2m$, $h \leq 0$, and $1-h^2 \geq 0$. Using the geometric inequality $\sqrt{2(a+b)} \geq \sqrt{a} + \sqrt{b}$, we find
\beqa
\frac{c^2}{2}\underbrace{\sqrt{\frac{2m(1-h^2)}{\phi_r(r)} + \frac{h^2}{4}}}_{\geq\,\sqrt{\frac{m(1-h^2)}{\phi_r(r)}}\, -\, \frac{h}{2\sqrt{2}}}\,\underbrace{\frac{(1-h^2)}{\phi_r(r)}\left(\frac{1}{2m}- \frac{3}{r}\right)}_{<\,0} &\leq&\nonumber \\
\frac{c^2\sqrt{m}}{2\phi_r(r)^{3/2}}\left(\frac{1}{2m} - \frac{3}{r}\right)(1-h^2)^{3/2}\
-\ \frac{c^2}{4\sqrt{2}\,\phi_r(r)}\left(\frac{1}{2m}- \frac{3}{r}\right)\hspace{-.175in}&&\hspace{-.175in}h\,(1-h^2)\nonumber
\eeqa
and
\beqa
-\frac{c^2}{2}\underbrace{\sqrt{\frac{2m(1-h^2)}{\phi_r(r)} + \frac{h^2}{4}}}_{\geq\,\frac{-h}{2}}\ \underbrace{\,\frac{h^2}{2mr}\,}_{>\,0} &\leq& \frac{c^2}{8mr}\,h^3.\nonumber
\eeqa
Hence on the interval $[0,1]$,
\beqa
\label{eqn:ineq1}
F_c(s,h) &\leq& \frac{c^2(1-s)}{8mr}\,h^3\ +\ \frac{c^2\sqrt{m}}{2\phi_r(r)^{3/2}}\left(\frac{1}{2m} - \frac{3}{r}\right)(1-h^2)^{3/2}\\
&&\hspace{.5in} +\ \frac{c^2}{4\phi_r(r)}\left(\frac{3-5\sqrt{2}s}{\sqrt{2}r} - \frac{1+\sqrt{2}s}{2\sqrt{2}m}\right)h\,(1-h^2).\nonumber
\eeqa
Now further restrict to the subinterval $[0,1/(3\sqrt{2})]$.  Since $r < 2m$, the third term on the right-hand side of \eqref{eqn:ineq1} is bounded above on this subinterval by
$$
\frac{c^2}{4\phi_r(r)}\left(\frac{1-3\sqrt{2}s}{\sqrt{2}m}\right)h\,(1-h^2)\ \leq \ 0,
$$
hence can be discarded. Using the hypothesis $c \geq \bar{c}$, the first term on the right-hand side of \eqref{eqn:ineq1} is seen to satisfy
$$
\frac{c^2(1-s)}{8mr}\,h^3\ \leq\ \frac{\bar{c}^2(1-1/3\sqrt{2})}{8mr}\,h^3.
$$
Finally we observe that $\phi_r(r) = (r/2m) e^{r/2m-1} < 1$ and $(1-h^2)^{3/2} \geq (1-h^2)^2$, so the second term on the right-hand side of \eqref{eqn:ineq1} is bounded above by
$$
-\frac{\bar{c}^2\sqrt{m}}{r}\,(1-h^2)^2.
$$

Setting $A := 2\sqrt{2}m$ and applying Lemma \ref{rbound}, we have
$$
	F_c(s,h) \ \leq\ -\bar{c}^2\sqrt{m} \Bigg[\frac{(1-h^2)^2}{As+\delta}\ -\ \frac{(1-1/3\sqrt{2} )}{8m^{3/2}}\,\frac{h^3}{As+\delta}\Bigg]\hspace{.2in}
$$
for $s \in [0,1/3\sqrt{2}]$. The result follows, with
\beqa
	B\ :=\ \bar{c}^2 \sqrt{m} \min \left\{ 1, \frac{(1-1/3\sqrt{2} )}{8m^{3/2}} \right\}\cdot
\eeqa
\end{proof}

We are now ready to prove our main result.

\begin{proof}[Proof of Theorem \ref{blowup}] We first show that $f_c$ blows up when $c$ is close enough to $\crit$. By Lemma \ref{lemma:jangh}, it suffices to find $s >0$ with $|h_c(s)| = 1$. Assume this is not the case, so $|h_c(s)| < 1$ for all $s$. It is easy to see from \eqref{jang2} that $F(s,0) < 0$ when $r \leq 2m$; this implies $h_c(s) \leq 0$ for $s \in [0,1]$ (and in fact $h_c(s) < 0$ for $s \in (0,1]$, though this stronger version will not be needed).

Therefore we have from Lemma \ref{lemma:Fbound} that
\beqa
\frac{dh_c}{ds}\ \leq\ -B\,\frac{(1-h_c^2)^2 - \,h_c^3}{As + \delta}\nonumber
\eeqa
for $s \in [0,1/3\sqrt{2}]$. This inequality is separable, so letting $\psi$ denote an antiderivative of the function $h \mapsto [(1-h^2)^2 - h^3]^{-1}$, we find that
$$
\psi(h_c(s))\ -\ \psi(h_c(0))\ \leq\ \frac{B}{A}\,\log\left( \frac{\delta}{As+\delta} \right)\cdot
$$
On the other hand, there is a constant $C > 0$ such that $(1-h^2)^2 - h^3 \geq C > 0$ for $h \in [-1,0]$, hence
$$
\psi(0)\ -\ \psi(h)\ =\ \int_h^0 \frac{dh}{(1-h^2)^2 - h^3}\ \leq\ -\frac{h}{C}\cdot
$$
Combining the above two inequalities, we find that
$$
	h_c(s)\ \leq\ \frac{BC}{A}\,\log\left( \frac{\delta}{As+\delta} \right),
$$
with $A,B$ and $C$ independent of $c$. Since $\delta \ra 0$ as $c \ra \crit$
and
$$
	\lim_{\delta \ra 0} \,\log\left( \frac{\delta}{As_0+\delta} \right)\ =\ -\infty
$$
for any fixed $s_0 > 0$, we conclude that there exists $c_0 > 0$ such that, if $c > c_0$, then $h_c(s_*) = -1$ for some $s_* \in [0,1]$. This completes the proof that for every $c$ sufficiently close to $\crit$, $f_c$ must blow up.

Now assume $c > c_0$, and let $f$ denote any solution to Jang's equation on $L_c$, with $h$ the corresponding solution to the transformed Jang equation defined by \eqref{hdef}. Suppose $f$ is a global solution, so $|h(s)| < 1$ for all $s \in \RR$. We know from Lemma \ref{lemma:2h} that $h_c$ is an odd function, hence its range contains the entire closed interval $[-1,1]$. Therefore the graphs of $h$ and $h_c$ must intersect, which is only possible if they coincide. This contradicts the bound $|h(s)| < 1$ and completes the proof.
\end{proof}

We conclude with the proof of Corollary \ref{SYblowup}. By Theorem \ref{blowup}, it suffices to show that the solution $\hat{f}_c$ constructed by Schoen and Yau is spherically symmetric. This is an immediate consequence of the following general result.

\begin{lemma} Let $(M,g,h)$ be an asymptotically flat initial data set, and let $\hat{f}$ denote the solution to Jang's equation given by Theorem \ref{SYexist}. If $\varphi:M \rightarrow M$ is a diffeomorphism such that $\varphi^*g = g$ and $\varphi^*h = h$, then $\hat{f} = \hat{f} \circ \varphi$.
\end{lemma}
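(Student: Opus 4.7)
My approach is to exploit the geometric naturality of Jang's equation together with the uniqueness of the capillary-regularized problem that underlies the Schoen--Yau construction.

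First, I would observe that Jang's equation is $\varphi$-equivariant. Since $\varphi^{*}g = g$ and $\varphi^{*}h = h$, the map $\varphi \times \textrm{id}$ is an isometry of $(M \times \RR, g + dt^{2})$ that sends the graph of $f \circ \varphi$ to the graph of $f$, and it preserves the vertical extension of $h$ by zero. Consequently $H_{\,\Gamma(f \circ \varphi)} = H_{\,\Gamma(f)} \circ \varphi$ and $\textrm{tr}_{\,\Gamma(f \circ \varphi)}\, h = (\textrm{tr}_{\,\Gamma(f)}\, h) \circ \varphi$, so $f$ solves Jang's equation if and only if $f \circ \varphi$ does. The identical reasoning applies to the capillary-regularized equation $H_{\,\Gamma(f)} - \textrm{tr}_{\,\Gamma(f)}\, h = \tau f$ used in the construction of $\hat f$, since $\tau f$ is a scalar.

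Next, I would recall the construction as presented in \cite{AEM10}: for each parameter $\tau > 0$ and each bounded smooth domain $\Omega \subset M$, the Dirichlet problem for the capillary-regularized equation with zero boundary data admits a \emph{unique} solution $f_{\Omega, \tau}$, uniqueness following from strict monotonicity provided by the $\tau f$ term (a standard comparison principle for quasilinear elliptic operators). The desired $\hat{f}$ is then obtained as a subsequential limit of $f_{\Omega_{k}, \tau_{j}}$ along an exhaustion $\Omega_{k} \nearrow M$ and a sequence $\tau_{j} \ra 0$. Combining this with the first step, if $\Omega_{k}$ is $\varphi$-invariant then $f_{\Omega_{k}, \tau_{j}} \circ \varphi$ and $f_{\Omega_{k}, \tau_{j}}$ both solve the same boundary value problem, so uniqueness forces them to coincide.

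To close the argument I would arrange the exhaustion itself to be $\varphi$-invariant. As an isometry of an asymptotically flat manifold, $\varphi$ acts asymptotically as a Euclidean isometry, so the closed subgroup $\langle \varphi \rangle \subset \textrm{Isom}(M, g)$ is compact; starting from any smooth compact exhaustion and replacing each member by its (compact) $\langle \varphi \rangle$-orbit, then smoothing, yields the required invariant exhaustion. Since the convergence of the $f_{\Omega_{k}, \tau_{j}}$ to $\hat{f}$ is locally smooth on $\Omega_{0}$ (by the a priori estimates of \cite{SY81, AEM10}), the pointwise identity $f_{\Omega_{k}, \tau_{j}} \circ \varphi = f_{\Omega_{k}, \tau_{j}}$ passes to the limit, yielding $\hat{f} \circ \varphi = \hat{f}$.

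The main obstacle I anticipate is purely technical: verifying that the $\varphi$-invariant exhaustion is compatible with whatever a priori and boundary-behavior estimates are used in \cite{AEM10} to select the specific subsequential limit $\hat{f}$, so that the $\varphi$-invariant solution produced here really is the one referred to by Theorem \ref{SYexist}. The conceptual ingredients\,---\,naturality of Jang's equation under symmetries and uniqueness of the regularized solutions\,---\,are entirely straightforward.
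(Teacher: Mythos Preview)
Your approach is correct and rests on the same two ingredients as the paper's proof: diffeomorphism-naturality of the (regularized) Jang operator, and uniqueness for the capillary-regularized problem. The difference lies in how the regularized problem is set up. You work with the Dirichlet problem on bounded domains $\Omega_k$ and then must manufacture a $\varphi$-invariant exhaustion, which in turn forces you into the compactness discussion for $\langle\varphi\rangle\subset\mathrm{Isom}(M,g)$ and the worry (which you correctly flag) about whether this invariant exhaustion reproduces the \emph{same} limit $\hat f$.

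The paper bypasses all of this by invoking uniqueness at the \emph{global} level: for each $t>0$ there is a unique solution $f_t$ on all of $M$ with $f_t\to 0$ at infinity on each end. Since $\varphi$ preserves the ends and the decay condition, $f_t\circ\varphi$ is another such solution, hence $f_t=f_t\circ\varphi$ directly. Then $\hat f$ is simply the pointwise limit of $f_t$ as $t\to 0$ on $\Omega_0$, and the invariance passes to the limit with no further work. This eliminates the need for invariant domains, for compactness of the symmetry group, and for checking compatibility with the specific subsequence chosen in the construction.
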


In other words, symmetries of the initial data give rise to symmetries of $\hat{f}$. It is important to note that this does not hold for arbitrary solutions to Jang's equation.

\begin{proof} Recalling the proof of Theorem \ref{SYexist}, we consider the regularized equation
\beqa
	\label{Jangreg}
	H_{\,\Gamma(f_t)} - \text{tr}_{\,\Gamma(f_t)}\,h\ =\ tf_t.
\eeqa
For each $t > 0$ there exists a unique solution $f_t$ that satisfies $f_t(x) \ra 0$ as $|x| \ra \infty$ on each end of $M$. Since \eqref{Jangreg} is coordinate invariant, we find that $f_t \circ \varphi$ is also a solution, which implies $f_t = f_t \circ \varphi$.

We now observe that $\Omega_0$, the domain of $\hat{f}$, is precisely the set where $\{f_t\}$ converges as $t \ra 0$, with
\[
	\hat{f}(x)\ :=\ \lim_{t \ra 0} f_t(x)
\]
for all $x \in \Omega_0$, and the result follows.
\end{proof}

\section{Local structure of the blowup solutions} \label{asymptotic}
Our final task is to describe the asymptotic structure of the blowup solutions $f_c$ constructed above.

\begin{theorem}
Let $f_c$ solve Jang's equation \eqref{jang} with initial conditions \eqref{initialcond}, and $h_c$ solve the transformed Jang equation \eqref{hdef} with $h_c(0) = 0$. Suppose $h_c(s_0) = -1$ for some $s_0 \in (0,1)$, and define $u_0 = \gamma^{-1}(s_0)$. Then there are positive constants $K$ and $\epsilon$ such that
\[
	f_c'(u)\ \geq\ - \frac{K}{\sqrt{u_0 - u}}
\]
for any $u \in [u_0 - \epsilon,u_0)$. In particular, 
\[
	\lim_{u \ra u_0} f_c(u)
\]
exists and is finite.
\end{theorem}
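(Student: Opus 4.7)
The plan is to translate the blowup of $f_c'$ at $u_0$ into a linear lower bound on $1 + h_c(s)$ near $s = s_0$, using the fact that the right-hand side $F_c(s, h)$ of the transformed Jang equation \eqref{jang2} extends smoothly to $h = -1$ and is strictly negative there. Once this is established, the identity $f_c'(u) = h_c(s)/\sqrt{1 - h_c(s)^2}$ from \eqref{hinv}, together with the affine change of variables $s = (2/c)u + 1$, converts the linear bound on $1 + h_c$ into a $1/\sqrt{u_0 - u}$ bound on $f_c'(u)$, which is integrable near $u_0$ and gives the finite limit.

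First I would substitute $h = -1$ directly into \eqref{jang2}. The square-root factor collapses to $\sqrt{0 + 1/4} = 1/2$, the bracket multiplying it reduces to $-1/(2mr)$, and the last term vanishes because of its $(1 - h^2)$ factor. A short computation yields
\[
	F_c(s, -1)\ =\ \frac{c^2(s - 1)}{8m\, r(s)},
\]
which is finite and strictly negative for every $s \in (0, 1)$; in particular it is strictly negative at $s = s_0$. Moreover, since the argument of the outer square root in \eqref{jang2} equals $1/4$ at $h = -1$, the function $F_c$ extends as a smooth function to a neighborhood of $(s_0, -1)$ in the $(s, h)$-plane.

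Second, using this smoothness of $F_c$, the continuity of $h_c$ at $s_0$, and the strict negativity $F_c(s_0, -1) < 0$, there exist constants $\alpha > 0$ and $\rho > 0$ such that $F_c(s, h_c(s)) \leq -\alpha$ for every $s \in [s_0 - \rho, s_0]$. Integrating the ODE $dh_c/ds = F_c(s, h_c)$ backwards from $s_0$ then gives
\[
	h_c(s)\ =\ -1\ -\ \int_s^{s_0} F_c(\tau, h_c(\tau))\, d\tau\ \geq\ -1 + \alpha(s_0 - s),
\]
so that $1 + h_c(s) \geq \alpha(s_0 - s)$ on this interval. After possibly shrinking $\rho$ so that $1 - h_c(s) \geq 1$ (permissible because $h_c(s_0) = -1$), it follows that $1 - h_c(s)^2 \geq \alpha(s_0 - s)$. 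Combining this with \eqref{hinv} and $s_0 - s = (2/c)(u_0 - u)$ yields
\[
	f_c'(u)\ =\ \frac{h_c(s)}{\sqrt{1 - h_c(s)^2}}\ \geq\ -\frac{1}{\sqrt{\alpha(s_0 - s)}}\ =\ -\frac{K}{\sqrt{u_0 - u}},
\]
with $K := \sqrt{c/(2\alpha)}$, valid on some interval $[u_0 - \epsilon, u_0)$.

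Finally, since $h_c$ is close to $-1$ near $s_0$, we have $f_c'(u) < 0$ on $[u_0 - \epsilon, u_0)$, so $f_c$ is monotone decreasing there. The estimate above then gives
\[
	\bigl|f_c(u) - f_c(u_0 - \epsilon)\bigr|\ \leq\ K \int_{u_0 - \epsilon}^{u_0} \frac{du}{\sqrt{u_0 - u}}\ =\ 2K\sqrt{\epsilon},
\]
so the monotone function $f_c$ is bounded on $[u_0 - \epsilon, u_0)$ and has a finite limit at $u_0$. The only slightly delicate step I anticipate is verifying the smooth extension of $F_c$ across $h = -1$ and the transversal approach of $h_c$ (i.e., $F_c(s_0, -1) \neq 0$); both are handled by the explicit computation above, which crucially uses the hypothesis $s_0 < 1$.
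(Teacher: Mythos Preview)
Your proof is correct and follows essentially the same route as the paper: compute $F_c(s_0,-1)=c^2(s_0-1)/8mr(s_0)<0$, deduce a linear lower bound on $1-h_c(s)^2$ near $s_0$, convert via \eqref{hinv} and the affine change $s_0-s=(2/c)(u_0-u)$ to the $1/\sqrt{u_0-u}$ estimate, and integrate. The only cosmetic difference is that the paper obtains the linear bound by applying Taylor's theorem to $h_c(s)^2$, whereas you integrate the inequality $dh_c/ds\leq -\alpha$ to bound $1+h_c(s)$ directly; both arguments exploit the same fact that $h_c'(s_0)<0$.
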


Therefore, solutions that blow up inside the horizon have the property that $f$ remains bounded while $f'$ diverges. 

\begin{proof}
By assumption we have $h_c(s_0) = -1$. Since $s_0 < 1$, this implies
\[
	\frac{dh_c}{ds}(s_0)\ =\ F_c(s_0,-1)\ =\ \frac{c^2(s_0-1)}{8mr(s_0)}\ <\ 0.
\]
It follows from Taylor's theorem that there exist positive constants $C$ and $\epsilon$ such that
\[
	h_c(s)^2 \ \leq \ 1 + C(s-s_0)
\]
when $s \in [s_0 - \epsilon, s_0]$. Therefore $1-h_c(s)^2 \geq -C(s-s_0)$, and so
\[
	|f_c'(\gamma^{-1}(s))|^2\ =\ \frac{h_c(s)^2}{1-h_c(s)^2}\ \leq\ \frac{1 + C(s-s_0)}{-C(s-s_0)}\ \leq\ \frac{1}{C(s_0-s)}\cdot
\]
Recalling that $f_c' < 0$, we use the relation $s_0-s = (2/c)(u_0-u)$ to find that
\[
	f_c'(u)\ \geq\ -\frac{K}{\sqrt{u_0-u}}
\]
with $K^{-1} := \sqrt{2C/c}$, as desired.

Finally, we integrate to obtain
\[
	f_c(u) - f_c(u_0 - \epsilon) \ \geq\ - \int_{u_0 - \epsilon}^u \frac{K}{\sqrt{u_0-v}}\,dv\ =\ 2K \left[\sqrt{u_0-\epsilon}\, - \sqrt{u} \right]
\]
for any $u \in [u_0 - \epsilon, u_0)$. Since $f_c'(u) < 0$ for $u \in (-c/2,u_0)$, we have that $f_c(u)$ is monotone decreasing and bounded below as $u \ra u_0$, hence the claimed limit exists.
\end{proof}

\section{Acknowledgments}
The authors thank Hubert Bray for suggesting this application of Jang's equation, Kuo--Wei Lee for carrying out some preliminary computations, and Michael Eichmair for numerous helpful discussions.
This work was supported by the World Premier International Research Center Initiative (WPI Initiative), MEXT, Japan.

\section{Appendix}

In this appendix we derive Jang's equation \eqref{eqn:jang1} on any hypersurface in the maximally extended Schwarzschild spacetime $(\widetilde{M},\tilde{g})$ of the form $M := \{u,\nu(u),\vtheta,\vphi\}$, where $\nu(u)$ is any smooth function of $u$ satisfying $\nu'(u) > 0$.  The hypersurfaces $L_c$ considered above then correspond to $\nu(u) = u+c$. Defining the smooth function
\beqa
\label{level}
\rho\colon \widetilde{M}\lra \RR\hspace{.5in},\hspace{.5in}(u,v,\vtheta,\vphi)\ \mapsto\ \nu(u) - v,\nonumber
\eeqa
it is easy to see that the differential $d\rho\colon T_{(u,\nu(u),\vtheta,\vphi)} \lra T_{0}\RR$ is surjective.  Hence $\rho^{-1}(0) = M$ is a (closed) hypersurface. Now we compute
the (scalar) second fundamental form 
of $M$.  We will work in the coordinates provided by the smooth embedding
$\iota\colon(u,\vtheta,\vphi)
\inc (u,\nu(u),\vtheta,\vphi)$, in terms of which the induced metric on $M$ is
\beqa
\tilde{g}|_{M} &:=& g\nonumber\\
&=& 2F(r)\nu'(u)\,du \otimes du + r^2 d\vtheta \otimes d\vtheta
+r^2\sin^2\vtheta\,d\vphi \otimes d\vphi.\label{inducedg2}
\eeqa
Note that $\nu'(u) > 0$ ensures that $M$ is spacelike.  Since $\iota$ is an embedding, its differential $d\iota$ is injective at each point; hence $\{
d\iota(\partial_u),d\iota(\partial_\vtheta),d\iota(\partial_\vphi)\}$ will be a
coordinate basis for $M$ in $\widetilde{M}$:
\beqa
\label{basis}
d\iota(\partial_u) & = &\partial_u + \nu'(u)\partial_v,\nonumber\\
d\iota(\partial_\vtheta) & = &\partial_\vtheta,\\
d\iota(\partial_\vphi) & = &\partial_\vphi\nonumber
\eeqa
(we use the same indices for both coordinates).  To find a unit vector normal to
$M$, we note that because $M = \rho^{-1}(0)$,
$\pm{\rm grad}\,\rho$ must be normal to $M$:
\beqa
\pm{\rm grad}\,\rho &=& \pm\sum_{i,j}\left(\tilde{g}^{ij}\partial_j \rho\right) \partial_i\nonumber\\
&=& \pm(\tilde{g}^{uv} \nu'\partial_v - \tilde{g}^{uv}\partial_u)\nonumber\\
&=& \pm F^{-1}(\nu'\partial_v - \partial_u).\nonumber
\eeqa
Since $\ip{\pm{\rm grad}\,\rho}{\pm{\rm grad}\,\rho}{\tilde{g}} = -2F^{-1}\nu' < 0$ ($\pm{\rm
grad}\,\rho$
is timelike, as it must be), the future-pointing unit normal vector field (with respect to the time orientation $\partial_v - \partial_u$ of $\widetilde{M}$) is
$$
N\ :=\ \frac{{\rm grad}\,\rho}{\sqrt{2\nu'F^{-1}}}\ =\ \frac{\nu'\partial_v - \partial_u}{\sqrt{2\nu'F}} \cdot\nonumber
$$
The (scalar) second fundamental form
$h$ of $M$ with respect to $N$ is defined to be $I\!I(X,Y) = h(X,Y)N$, with
$\ip{N}{N}{\tilde{g}} = -1$.
It is then easy to verify that $h = \widetilde{\nabla}\ip{N}{\cdot}{\tilde{g}}$.
Now, since
the
basis for $M$ we are 
working with is $\{\partial_u + \nu'\partial_v,\partial_\vtheta,\partial_\vphi\}$
(see (\ref{basis})), the 
component $``h_{uu}"$ is
\beqa
h_{uu} &=& \widetilde{\nabla}\ip{N}{\cdot}{\tilde{g}}(\partial_u +
\nu'\partial_v,\partial_u
+ \nu'\partial_v)\nonumber\\
&=& (\widetilde{\nabla}_{\partial_u +
\nu'\partial_v}\,\ip{N}{\cdot}{\tilde{g}})(\partial_u + \nu'\partial_v)
\nonumber\\
&=& (\partial_u + \nu'\partial_v)\underbrace{\ip{N}{\partial_u +
\nu'\partial_v}{\tilde{g}}}_{0}\ -\ \ip{N}{\widetilde{\nabla}_{\partial_u +
\nu'\partial_v}(\partial_u +
\nu'\partial_v)}{\tilde{g}}\nonumber\\
&=& -\ip{N}{\widetilde{\nabla}_{\partial_u}(\partial_u +
\nu'\partial_v)}{\tilde{g}}\ -\ \nu'\ip{N}{
\widetilde{\nabla}_{\partial_v}(\partial_u +
\nu'\partial_v)}{\tilde{g}}\nonumber\\
&=& -\ip{N}{\!\!\underbrace{\widetilde{\Gamma}^p_{uu}}_{\widetilde{\Gamma}^u_{uu}
\neq \,0}\!\!
\partial_p + \nu''\partial_v + \nu'\underbrace{\widetilde{\Gamma}^p_{uv}}_{{\rm
all}~0}\partial_p}{\tilde{g}}\ -\ 
\nu'\ip{N}{\underbrace{\widetilde{\Gamma}^p_{vu}}_{{\rm all}~0}\partial_p
+ \nu'\!\!
\underbrace{\widetilde{\Gamma}^p_{vv}}_{\widetilde{\Gamma}^v_{vv} \neq
\,0}\!\!\partial_p}{\tilde{g}}\nonumber\\
&=&
-\underbrace{\widetilde{\Gamma}^u_{uu}}_{F^{-1}F_{u}}\!\ip{N}{
\partial_u}{\tilde{g}}\ -\ 
\nu''\ip{N}{\partial_v}{\tilde{g}}\ -\
(\nu')^2\!\!\!\underbrace{\widetilde{\Gamma}^v_{vv}}_{F^{-1}F_{v}}\!\!
\ip{N}{\partial_v}{\tilde{g}}\nonumber\\
&=& \frac{1}{\sqrt{2\nu'F}}\left[-\nu'F_{u} + \nu{''}F + (\nu')^2F_{v}\right]\nonumber\\
&=& \frac{1}{\sqrt{2\nu'F}}\left[\nu{''}F + (\nu')^2F_{r}\,r_v - \nu'F_{r}\,r_u\right].\nonumber
\eeqa
($F_u = \partial F/\partial u, F_v = \partial F/\partial v$, $F_r = dF/dr$.)  The other diagonal entries are
\beqa
h_{\vtheta\vtheta} &=& \widetilde{\nabla}\ip{N}{\cdot}{\tilde{g}}(\partial_\vtheta,\partial_\vtheta)\nonumber\\
&=& (\widetilde{\nabla}_{\partial_\vtheta}\,\ip{N}{\cdot}{\tilde{g}})\,\partial_\vtheta
\nonumber\\
&=& \partial_\vtheta\underbrace{\ip{N}{\partial_\vtheta}{\tilde{g}}}_{0}\ -\ \ip{N}{\widetilde{\nabla}_{\partial_\vtheta}\partial_\vtheta}{\tilde{g}}\nonumber\\
&=& -\ip{N}{\widetilde{\Gamma}^p_{\vtheta\vtheta}\partial_p}{\tilde{g}}\nonumber\\
&=&\frac{1}{\sqrt{2\nu'F}}\!\!\!\!\underbrace{\widetilde{\Gamma}^v_{\vtheta\vtheta}}_{-F^{-1}r\,r_{u}}\!\!\!\tilde{g}_{uv}\ -\ \frac{1}{\sqrt{2\nu'F}}\,\nu'\!\!\!\!\underbrace{\widetilde{\Gamma}^u_{\vtheta\vtheta}}_{-F^{-1}r\,r_{v}}\!\!\!\tilde{g}_{uv}\nonumber\\
&=& \frac{r}{\sqrt{2\nu'F}}\left[\nu'r_{v} - r_{u}\right],\nonumber
\eeqa
\beqa
h_{\vphi\vphi} &=& \widetilde{\nabla}\ip{N}{\cdot}{\tilde{g}}(\partial_\vphi,\partial_\vphi)\nonumber\\
&=& (\widetilde{\nabla}_{\partial_\vphi}\,\ip{N}{\cdot}{\tilde{g}})\,\partial_\vphi
\nonumber\\
&=& \partial_\vphi\underbrace{\ip{N}{\partial_\vphi}{\tilde{g}}}_{0}\ -\ \ip{N}{\widetilde{\nabla}_{\partial_\vphi}\partial_\vphi}{\tilde{g}}\nonumber\\
&=& -\ip{N}{\widetilde{\Gamma}^p_{\vphi\vphi}\partial_p}{\tilde{g}}\nonumber\\
&=&\frac{1}{\sqrt{2\nu'F}}\!\!\!\!\!\!\!\!\!\!\underbrace{\widetilde{\Gamma}^v_{\vphi\vphi}}_{-F^{-1}\sin^2\!\vtheta\,r\,r_{u}}\!\!\!\!\!\!\!\!\!\!\tilde{g}_{uv}\ -\ \frac{1}{\sqrt{2\nu'F}}\,\nu'\!\!\!\!\!\!\!\!\!\!\!\underbrace{\widetilde{\Gamma}^u_{\vphi\vphi}}_{-F^{-1}\sin^2\!\vtheta\,r\,r_{v}}\!\!\!\!\!\!\!\!\!\tilde{g}_{uv}\nonumber\\
&=& \frac{r \sin^2\vtheta}{\sqrt{2\nu'F}}\left[\nu'r_{v} - r_{u}\right],\nonumber
\eeqa
while the off-diagonal entries are all zero:
\beqa
h_{u\vtheta}\ =\ h_{\vtheta u} &=&
\widetilde{\nabla}\ip{N}{\cdot}{\tilde{g}}\,(\partial_\vtheta,\partial_u + 
\nu'\partial_v)\nonumber\\ 
&=& (\widetilde{\nabla}_{\partial_u +
\nu'\partial_v}\,\ip{N}{\cdot}{\tilde{g}})\partial_\vtheta\nonumber\\
&=& (\partial_u +
\nu'\partial_v)\underbrace{\ip{N}{\partial_{\vtheta}}{\tilde{g}}}_{0}\ -\
\ip{N}{
\underbrace{\widetilde{\nabla}_{\partial_u +
\nu'\partial_v}\partial_\vtheta}_{0}}{\tilde{g}} = 0,\nonumber\\
h_{u\vphi}\ =\ h_{\vphi u} &=&
\widetilde{\nabla}\ip{N}{\cdot}{\tilde{g}}\,(\partial_\vphi,\partial_u + 
\nu'\partial_v)\nonumber\\ 
&=& (\widetilde{\nabla}_{\partial_u +
\nu'\partial_v}\,\ip{N}{\cdot}{\tilde{g}})\partial_\vphi\nonumber\\
&=& (\partial_u +
\nu'\partial_v)\underbrace{\ip{N}{\partial_{\vphi}}{\tilde{g}}}_{0}\ -\
\ip{N}{
\underbrace{\widetilde{\nabla}_{\partial_u +
\nu'\partial_v}\partial_\vphi}_{0}}{\tilde{g}} = 0,\nonumber\\
h_{\vtheta\vphi}\ =\ h_{\vphi\vtheta} &=& \widetilde{\nabla}\ip{N}{\cdot}{\tilde{g}}\,(\partial_\vphi,\partial_\vtheta)\nonumber\\ 
&=& (\widetilde{\nabla}_{\partial_\vtheta}\,\ip{N}{\cdot}{\tilde{g}})\partial_\vphi\nonumber\\
&=& \partial_\vtheta\underbrace{\ip{N}{\partial_{\vphi}}{\tilde{g}}}_{0}\ -\
\underbrace{\ip{N}{\widetilde{\nabla}_{\partial_\vtheta}\partial_\vphi}{\tilde{g}}}_{0} = 0.\nonumber
\eeqa
The matrix $(h_{ij})$ of the (scalar) second fundamental form in the basis (\ref{basis}) is therefore
 \beqa
 \label{inducedh2}
    (h_{ij}) = \frac{1}{\sqrt{2\nu'F}}\left[
      \begin{array}{ccc}
       \nu{''}F + \nu'F_{r}(\nu'r_v - r_u) & 0 & 0\\
        0 & r\left(\nu'r_{v} - r_{u}\right) & 0\\
        0 & 0 & r \sin^2\vtheta\left(\nu'r_{v} - r_{u}\right)\\
      \end{array}
    \right].\nonumber\\
\eeqa
We can now write down Jang's equation.  It is $\mathfrak{g}^{ij}\mathfrak{h}_{ij} = 0$, where
\beqa
\mathfrak{g}_{ij} &=& g_{ij} + f_if_j,\nonumber\\
\mathfrak{h}_{ij} &=& h_{ij} -\frac{\nabla_i\nabla_j f}{\sqrt{1+|df|_{g}^2}},\nonumber
\eeqa
$\nabla$ is the Levi-Civita connection on the hypersurface $(M,g)$, $f(u)$ is the unknown function,
and $g_{ij}$ and $h_{ij}$ are given by (\ref{inducedg2}) and (\ref{inducedh2}), respectively.  (We will write $f'$ in place of $f_u$.)   To determine $\mathfrak{g}^{ij}$, first note that
 \beqa
    (\mathfrak{g}_{ij}) = \left[
      \begin{array}{ccc}
       2\nu'F + (f')^2 & 0 & 0\\
        0 & r^2 & 0\\
        0 & 0 & r^2 \sin^2\vtheta\\
      \end{array}
    \right],\nonumber
\eeqa
because $f_{\vtheta} = f_{\vphi} \equiv 0$ (note also that $2\nu'F > 0$).  Since $|df|_{g}^2 = g^{uu}(f')^2 = (2\nu'F)^{-1}(f')^2$, the inverse metric is
\beqa
    (\mathfrak{g}^{ij}) = \left[
      \begin{array}{ccc}
       \frac{1}{2\nu'F+(f')^2} & 0 & 0\\
        0 & \frac{1}{r^2} & 0\\
        0 & 0 & \frac{1}{r^2 \sin^2\vtheta}\\
      \end{array}
    \right].\nonumber
\eeqa
To determine $\mathfrak{h}_{ij}$, first note that
\beqa
\nabla_u\nabla_u f \!&=&\! f'' - \Gamma^u_{uu}\,f' \ =\  f'' - (2\nu'F)^{-1}\left(\nu''F + \nu'F_r\,(r_u+r_v\nu')\right)f',\nonumber\\
\nabla_\vtheta\nabla_\vtheta f \!&=&\! -\Gamma^u_{\vtheta\vtheta}\,f' \ =\ (2\nu'F)^{-1}r\,(r_u+r_v\nu')\,f', \nonumber\\
\nabla_\vphi\nabla_\vphi f \!&=&\!-\Gamma^u_{\vphi\vphi}\,f' \ =\ (2\nu'F)^{-1}\sin^2\vtheta\,r\,(r_u+r_v\nu')\,f',\nonumber
\eeqa
and all other $\nabla_i\nabla_j f = 0$ (bear in mind that the coordinates on $M$ are $(u,\vtheta,\vphi)$ and the $\Gamma^{k}_{ij}$ are the Christoffel symbols corresponding to $M$; they are not the same as the Christoffel symbols $\widetilde{\Gamma}^{k}_{ij}$ for $\widetilde{M}$ above).  Hence $\mathfrak{h}_{ij}$ is given by
 \beqa
       \mathfrak{h}_{uu} &=& \frac{\nu{''}F + \nu'F_{r}(\nu'r_v - r_u)}{\sqrt{2\nu'F}} - \frac{f''-(2\nu'F)^{-1}(\nu''F + \nu'F_r\,(r_u+r_v\nu'))\,f'}{\sqrt{1+(2\nu'F)^{-1}(f')^2}}\,,\nonumber\\
        \mathfrak{h}_{\vtheta\vtheta} &=& \frac{r\left(\nu'r_{v} - r_{u}\right)}{\sqrt{2\nu'F}} - \frac{(2\nu'F)^{-1}r\,(r_u+r_v\nu')\,f'}{\sqrt{1+(2\nu'F)^{-1}(f')^2}}\,,\nonumber\\
        \mathfrak{h}_{\vphi\vphi} &=& \frac{r \sin^2\vtheta\left(\nu'r_{v} - r_{u}\right)}{\sqrt{2\nu'F}} - \frac{(2\nu'F)^{-1}\sin^2\vtheta\,r\,(r_u+r_v\nu')\,f'}{\sqrt{1+(2\nu'F)^{-1}(f')^2}}\,,\nonumber
\eeqa
and all other $\mathfrak{h}_{ij} = 0$.  Jang's equation is thus the sum of the following three terms set equal to zero:
\beqa
\mathfrak{g}^{uu}\mathfrak{h}_{uu} &=&\frac{1}{2\nu'F + (f')^2}\left[\frac{\nu{''}F + \nu'F_{r}(\nu'r_v - r_u) }{\sqrt{2\nu'F}}\right.\nonumber\\
&&\left.-\,\frac{f'' - (2\nu'F)^{-1}(\nu''F+\nu'F_r\,(r_u+r_v\nu'))\,f'}{\sqrt{1+(2\nu'F)^{-1}(f')^2}}\right],\nonumber\\
\mathfrak{g}^{\vtheta\vtheta}\mathfrak{h}_{\vtheta\vtheta} &=&\frac{1}{\sqrt{2\nu'F}}\left[\frac{\left(\nu'r_{v} - r_{u}\right)}{r} - \frac{(r_u+r_v\nu')\,f'}{r\sqrt{(2\nu'F)+(f')^2}}\right],\nonumber\\
\mathfrak{g}^{\vphi\vphi}\mathfrak{h}_{\vphi\vphi} &=& \frac{1}{\sqrt{2\nu'F}}\left[\frac{\left(\nu'r_{v} - r_{u}\right)}{r} - \frac{(r_u+r_v\nu')\,f'}{r\sqrt{(2\nu'F)+(f')^2}}\right]\cdot\nonumber
\eeqa
Writing out this sum, we have, after a little simplification,
 \beqa
 \label{jangprefinal}
f'' \!\!\!&+&\!\!\! \left[(2\nu'F) + (f')^2\right]\frac{r_u+r_v\nu'}{r\nu'F}\,f' - \frac{\nu''F+\nu'F_r\,(r_u+r_v\nu')}{2\nu'F}\,f'\nonumber\\
&&-\,\sqrt{(2\nu'F) + (f')^2}\,\,\frac{\nu{''}F + \nu'F_{r}(\nu'r_v - r_u)}{2\nu'F}\\
&&-\, \left[(2\nu'F) + (f')^2\right]^{3/2}\frac{\left(\nu'r_{v} - r_{u}\right)}{r\nu'F}\ =\ 0.\nonumber
 \eeqa
 For the hypersurfaces under consideration in this paper, $\nu(u) = u + c$ with $c \in (-\sqrt{8m/e},\sqrt{8m/e})$.  Furthermore, we have the following identities:
 \beqa
 \label{rs}
r_u\ =\ \frac{\nu(u)}{\phi_r(r)}
\hspace{.2in},\hspace{.2in}r_v\ =\ \frac{u}{\phi_r(r)}\hspace{.2in},\hspace{.2in}F_r(r)\ =\ 
-\left(\frac{1}{r}+\frac{1}{2m}\right)F(r).\nonumber
\eeqa
Inserting these into (\ref{jangprefinal}), we obtain
 \beqa
&&f''\ +\ \underbrace{\left[2F(r) + (f')^2\right]\frac{2u+c}{r\phi_r(r)F(r)}\,f' +\left(\frac{1}{r}+\frac{1}{2m}\right)\frac{2u+c}{2\phi_r(r)}f'}_{\frac{2u+c}{2\phi_r(r)}\left(\frac{1}{2m}\,+\,\frac{5}{r}\right)f'\ +\ \frac{2u+c}{r\phi_r(r)F(r)}(f')^3}\nonumber\\
&&+\ \underbrace{\left[2F(r) + (f')^2\right]^{3/2}\frac{c}{r\phi_r(r)F(r)}-\sqrt{2F(r) + (f')^2}\left(\frac{1}{r}+\frac{1}{2m}\right)\frac{c}{2\phi_r(r)}}_{-\sqrt{\frac{F(r)}{2}+\frac{(f_c')^2}{4}}\frac{c}{\phi_r(r)}\left[\frac{1}{2m}
\,-\,\frac{3}{r}\,-\,\frac{2(f_c')^2}{rF(r)}\right]}\ =\ 0,\nonumber
 \eeqa 
which is precisely (\ref{jang}), once we note that $F(r)\phi_r(r) = 4m$.

\bibliographystyle{unsrt}
\bibliography{Jang2}

\begin{thebibliography}{10}

\bibitem{J78}
Pong~Soo Jang.
\newblock On the positivity of energy in general relativity.
\newblock {\em J. Math. Phys.}, 19(5):1152--1155, 1978.

\bibitem{SY81}
Richard Schoen and Shing~Tung Yau.
\newblock Proof of the positive mass theorem. {II}.
\newblock {\em Comm. Math. Phys.}, 79(2):231--260, 1981.

\bibitem{SY83}
Richard Schoen and S.~T. Yau.
\newblock The existence of a black hole due to condensation of matter.
\newblock {\em Comm. Math. Phys.}, 90(4):575--579, 1983.

\bibitem{E95}
Douglas~M. Eardley.
\newblock Gravitational collapse of vacuum gravitational field configurations.
\newblock {\em J. Math. Phys.}, 36(6):3004--3011, 1995.

\bibitem{Y01}
Shing~Tung Yau.
\newblock Geometry of three manifolds and existence of black hole due to
  boundary effect.
\newblock {\em Adv. Theor. Math. Phys.}, 5(4):755--767, 2001.

\bibitem{AEM10}
Lars Andersson, Michael Eichmair, and Jan Metzger.
\newblock Jang's equation and its applications to marginally trapped surfaces.
\newblock In {\em Complex analysis and dynamical systems {IV}. {P}art 2},
  volume 554 of {\em Contemp. Math.}, pages 13--45. Amer. Math. Soc.,
  Providence, RI, 2011.

\bibitem{AM09}
Lars Andersson and Jan Metzger.
\newblock The area of horizons and the trapped region.
\newblock {\em Comm. Math. Phys.}, 290(3):941--972, 2009.

\bibitem{Eichmair}
Michael Eichmair.
\newblock The {P}lateau problem for marginally outer trapped surfaces.
\newblock {\em J. Differential Geom.}, 83(3):551--583, 2009.

\bibitem{EM12}
Michael Eichmair and Jan Metzger.
\newblock Jenkins--{S}errin type results for the {J}ang equation.
\newblock preprint: http://arxiv.org/abs/1205.4301v1.

\bibitem{MalecMurchadha}
Edward Malec and Niall~O Murchadha.
\newblock The {J}ang equation, apparent horizons and the {P}enrose inequality.
\newblock {\em Classical and Quantum Gravity}, 21(24):5777, 2004.

\bibitem{P65}
Roger Penrose.
\newblock Gravitational collapse and space-time singularities.
\newblock {\em Phys. Rev. Lett.}, 14:57--59, 1965.

\bibitem{eichmair2012topological}
Michael Eichmair, Gregory~J Galloway, and Daniel Pollack.
\newblock Topological censorship from the initial data point of view.
\newblock {\em J. Differential Geom.}, 95(3):389--405, 2013.

\bibitem{W84}
Robert~M. Wald.
\newblock {\em General relativity}.
\newblock University of Chicago Press, Chicago, IL, 1984.

\bibitem{WI91}
Robert~M. Wald and Vivek Iyer.
\newblock Trapped surfaces in the {S}chwarzschild geometry and cosmic
  censorship.
\newblock {\em Phys. Rev. D (3)}, 44(12):R3719--R3722, 1991.

\bibitem{T94}
K.~P. Tod.
\newblock Conical singularities and torsion.
\newblock {\em Classical and Quantum Gravity}, 11(5):1331--1339, 1994.

\bibitem{P79}
Roger Penrose.
\newblock Singularities and time-asymmetry.
\newblock In {\em General Relativity: An {E}instein Centenerary Survey}, pages
  581--638. Cambridge University Press, Cambridge, 1979.

\bibitem{T87}
K.~P. Tod.
\newblock Quasilocal mass and cosmological singularities.
\newblock {\em Classical and Quantum Gravity}, 4(5):1457--1468, 1987.

\bibitem{G91}
Stephen~W. Goode.
\newblock Isotropic singularities and the {P}enrose-{W}eyl tensor hypothesis.
\newblock {\em Classical Quantum Gravity}, 8(1):L1--L6, 1991.

\bibitem{ON83}
Barrett O'Neill.
\newblock {\em Semi-{R}iemannian geometry}, volume 103 of {\em Pure and Applied
  Mathematics}.
\newblock Academic Press Inc. [Harcourt Brace Jovanovich Publishers], New York,
  1983.
\newblock With applications to relativity.

\bibitem{W10}
Catherine Williams.
\newblock On blow-up solutions of the {J}ang equation in spherical symmetry.
\newblock {\em Classical and Quantum Gravity}, 27(6):065001, 13, 2010.

\end{thebibliography}

\end{document}